\setlist[itemize]{noitemsep}
\newtheorem{theorem}{Theorem}
\declaretheorem{remark}
\title{Improving Nonpreemptive Multiserver Job Scheduling with Quickswap}
\author[1]{Zhongrui Chen}
\ead{jcpwfloi@cs.unc.edu}
\author[2]{Adityo Anggraito}
\ead{adityo.anggraito@unive.it}
\author[2]{Diletta Olliaro}
\ead{diletta.olliaro@unive.it}
\author[2]{Andrea Marin}
\ead{marin@unive.it}
\author[3]{Marco Ajmone Marsan}
\ead{marco.ajmone@imdea.org}
\author[1]{\\{Benjamin Berg}}
\ead{ben@cs.unc.edu}
\author[4]{Isaac Grosof}
\ead{izzy.grosof@northwestern.edu}
\affiliation[1]{organization={University of North Carolina at Chapel Hill},
city={Chapel Hill, NC},
country={USA}}
\affiliation[2]{organization={Università Ca’ Foscari Venezia},
city={Venice},
country={Italy}}
\affiliation[3]{organization={IMDEA Networks Institute},
city={Leganes},
country={Spain}}
\affiliation[4]{organization={Northwestern University},
city={Evanston, IL},
country={USA}}
\definecolor{unc}{RGB}{75,156,211}
\newcommand{\wt}{\widetilde}
\newcommand{\wh}{\widehat}
\newcommand{\E}{\mathbb{E}}
\DeclarePairedDelimiter{\norm}{\lVert}{\rVert}
\DeclarePairedDelimiter{\set}{\{}{\}}
\newcommand{\phase}{H}
\begin{document}

\begin{abstract}
Modern data center workloads are composed of \textit{multiserver jobs}, computational jobs that require multiple servers in order to run.
A data center can run many multiserver jobs in parallel, as long as it has sufficient resources to meet their individual demands.
Multiserver jobs are generally \emph{stateful}, meaning that job preemptions incur significant overhead from saving and reloading the state associated with running jobs.
Hence, most systems try to avoid these costly job preemptions altogether.
Given these constraints, a \emph{scheduling policy} must determine what set of jobs to run in parallel at each moment in time to minimize the mean response time across a stream of arriving jobs.
Unfortunately, simple non-preemptive policies such as First-Come First-Served (FCFS) may leave many servers idle, resulting in high mean response times or even system instability.
Our goal is to design and analyze non-preemptive scheduling policies for multiserver jobs that maintain high system utilization to achieve low mean response time.

One well-known non-preemptive scheduling policy, Most Servers First (MSF), prioritizes jobs with higher server needs and is known for achieving high resource utilization.  
However, MSF causes extreme variability in job waiting times, and can perform significantly worse than FCFS in practice.
To address this issue, we propose and analyze a class of scheduling policies called \textit{Most Servers First with Quickswap} (MSFQ) that performs well in a wide variety of cases.
MSFQ reduces the variability of job waiting times by
periodically granting priority to other jobs in the system.
We provide both stability results and an analysis of mean response time under MSFQ to prove that our policy dramatically outperforms MSF in the case where jobs either request one server or all the servers.
In more complex cases, we evaluate MSFQ in simulation.
We show that, with some additional optimization, variants of the MSFQ policy can greatly outperform MSF and FCFS on real-world multiserver job workloads.

\end{abstract}

\maketitle

\section{Introduction}
\label{sec:intro}

Modern data centers serve \emph{multiserver jobs} that occupy multiple servers simultaneously~\cite{mor-2022,tirmazi2020borg,delimitrou2014quasar,dean2008mapreduce}.
Each multiserver job has an associated \emph{server need}, the number of servers the job requires to run, and \emph{job size}, the amount of time the job must run to be completed.
A set of multiserver jobs can run in parallel, but only if the system has enough dedicated servers for each job.
A data center \emph{scheduling policy} must select which jobs to run in parallel at every moment in time.
Given a fixed number of servers, $k$, our goal is to design a scheduling policy that minimizes the \emph{mean response time} across jobs in a stream of arriving multiserver jobs --- the average time from when a job arrives to the system until it is completed.

There are two central difficulties in designing performant scheduling policies for multiserver jobs.
First, modern datacenter workloads generally exhibit large variability in both the server needs and sizes of their jobs \cite{tirmazi2020borg}.
As a result, it is usually impossible to utilize all available servers using the set of multiserver jobs currently in the system.
In general, leaving more servers unutilized on average will lead to higher mean response time or even system instability.
Unfortunately, maximizing the number of utilized servers at a specific moment in time requires solving a knapsack problem instance.
It is even more difficult, then, to maximize the utilization of the available servers as jobs enter and exit the system over time.

Second, modern multiserver jobs are generally \emph{stateful}, meaning that job preemptions require persisting and/or reloading a significant amount of program state \cite{psychas2017non}.
As a result, job preemptions or migrations can take a prohibitively long amount of time to perform.
Due to this overhead, data centers typically employ \emph{non-preemptive} scheduling policies that avoid costly preemptions altogether.
Given these two constraints, \emph{this paper designs and analyzes new, non-preemptive scheduling policies that aim to minimize the mean response time across a stream of multiserver jobs.}

\subsection{Prior Approaches}
Much of the prior work on multiserver job scheduling uses frequent job preemptions to ensure that resource utilization remains high as jobs enter and leave the system~\cite{chen.usenix, grosof-2022-pomacs, grosof-2023-serverfilling,georgiadis-2006}.
These preemptive policies are of limited utility when processing the stateful jobs that are common in data centers.

When it comes to non-preemptive policies, there are three central approaches suggested in the literature:

\noindent\textbf{First-Come First-Served} (FCFS) is a na\"ive non-preemptive policy that serves jobs in arrival order until the system runs out of available servers.
For example, when a job with a large server need reaches the front of the queue, the system may not have enough available servers to fit this job in service.
FCFS stops scheduling additional jobs at this point, even if other jobs in the queue could fit into service.
This phenomenon, known as \emph{Head-of-the-Line blocking}, causes FCFS to underutilize servers, resulting in high mean response time.
Although FCFS is a simple policy, analyzing it has been proven difficult due to its dependence on the random arrival order of jobs with different server needs.
Only recently, \cite{grosof-2024-marcreset} derived mean response time bounds that confirm the empirical observation that FCFS performs poorly in practice.

\noindent\textbf{Most Servers First (MSF)}~\cite{bichler, beloglazov,grosof-2023-serverfilling} is a non-preemptive policy that prioritizes jobs with larger server needs.
Specifically, whenever the system has available servers, jobs are considered in descending server need order.
Jobs that find their required number of servers are put in service successively.
To understand both the benefits and the pitfalls of MSF, consider an example where jobs either need one server or $k$ servers.
We refer to this case as the \emph{one-or-all} case for multiserver jobs.
In this case, MSF serves jobs in two alternating \emph{phases}.
First, MSF serves $k$-server jobs until none remain in the system.
Then MSF serves 1-server jobs until none remain before returning to serve $k$-server jobs.
We will show in \cref{sec:throughput-optimal} that, by switching between these two phases, MSF achieves optimal long-run average resource utilization in the one-or-all case.

While one might hope that MSF leverages its high resource utilization to achieve low mean response time, we also find that MSF takes an increasingly long time to switch phases as the job arrival rate increases (see \Cref{sec:rt}).
This creates a feedback loop in the system whereby many 1-server jobs accumulate while the system processes $k$-server jobs, leading to a long period of serving 1-server jobs during which many $k$-server jobs will accumulate.
The two things to note about this process are that (i) despite its name, MSF can spend long periods of time giving priority to 1-server jobs over $k$-server jobs and (ii) because the class of jobs not in service accumulates quickly, there are almost always a large number of jobs in the system under MSF.
Figure~\ref{fig:transient} illustrates this problem via simulations that track the number of jobs in the system under MSF for the one-or-all case.
As MSF alternates between phases, jobs of the opposite class accumulate quickly in the queue.
While all jobs are eventually served, this behavior ensures that a significant fraction of arriving jobs have long queueing times, leading to a high overall mean response time.

\begin{figure}[b]
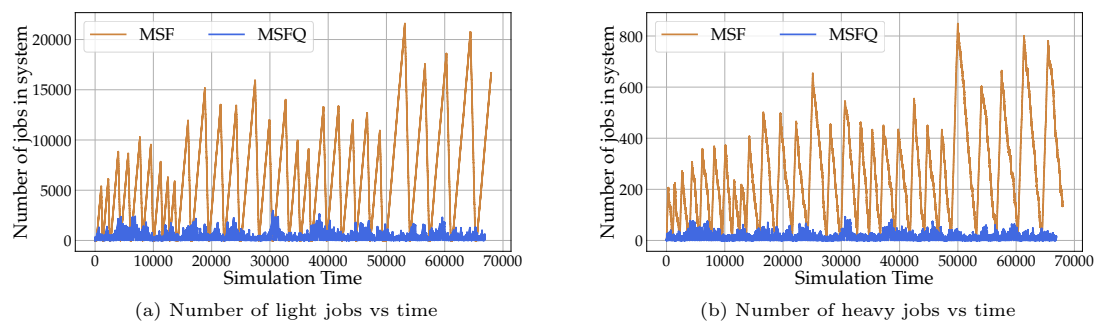

\centering
\begin{subfigure}[t]{0.45\textwidth}
    \centering
    \includegraphics[width=\textwidth]{plots/transient_small.pdf}
    \caption{Number of light jobs vs time}\label{fig:small_transient}
\end{subfigure}
\begin{subfigure}[t]{0.45\textwidth}
    \centering
    \includegraphics[width=\textwidth]{plots/transient_big.pdf}
    \caption{Number of heavy jobs vs time}\label{fig:large_transient}
\end{subfigure}\hfill
\caption{Number of jobs in the system under MSF and MSFQ where there are $k$=32 servers, 90\% of the job arrivals are 1-server jobs, mean job sizes are 1 for 1-server jobs and $k$-server jobs, and jobs arrive at a rate of 7.5 jobs/second.}
\label{fig:transient}
\end{figure}
\noindent\textbf{First-Fit} is a variant of FCFS that avoids head-of-line blocking by continuing to examine jobs in arrival order after finding a job that does not fit in service.  
One might hope that this policy gets a near-optimal resource utilization without the harmful periodic behavior of MSF.
Unfortunately, in the one-or-all case, First-Fit has the same alternating behavior as MSF, but First-Fit spends even more time serving 1-server jobs.

\subsection{A New Approach: Most Servers First with Quickswap}
Our central observation is that, while MSF achieves high resource utilization, it does a poor job on switching which class of jobs are in service.
Specifically, in the one-or-all case, MSF ties the decision to switch from serving 1-server jobs to serving $k$-server jobs to the time required to drain all 1-server jobs from the system.
This time, which is essentially a partial busy period in an $M/M/k$ queue, explodes as the arrival rate increases or as $k$ becomes large \cite{artalejo2001analysis}, allowing many $k$-server jobs to accumulate in the system.
To reduce mean response time, our goal is to maintain the high utilization of MSF while shortening the time the scheduling policy takes to switch between job classes.

With this motivation in mind, we propose a new class of policies called \emph{Most Servers First with Quickswap (MSFQ)}, which is designed to improve the performance of MSF in the one-or-all case.
Unlike MSF, which tries to drain all 1-server jobs from the system before switching, MSFQ  switches to serving $k$-server jobs whenever the number of 1-server jobs in the system falls below a threshold, $\ell$.
That is, when the number of 1-server jobs falls below $\ell$, MSFQ stops admitting jobs into service and lets all running 1-server jobs complete.
MSFQ can then begin serving $k$-server jobs.
By increasing $\ell$, we can shorten the time MSFQ requires to switch phases, reducing the mean response time compared to MSF.
Furthermore, we will prove that MSFQ maintains the high resource utilization achieved by MSF.

Figure~\ref{fig:transient} compares the number of jobs in the system under the MSF and MSFQ policies (the yellow and blue curves, respectively). 
Setting a sufficiently high value of $\ell$ for MSFQ ($\ell=k-1$ in this case) greatly dampens the feedback that causes jobs to accumulate under MSF.
As a result, an MSFQ policy can achieve a much lower mean response time than MSF.
We will also show that MSFQ policies are much better at balancing the mean response time between different job classes.
While our MSFQ policies are tailored specifically to the one-or-all case, we will also explore generalizations of this policy to an arbitrary number of job classes.

\subsection{Contributions}
Inspired by the high resource utilization of the MSF policy, this paper formally defines and analyzes the class of MSFQ policies for scheduling multiserver jobs.
Our analysis has two main goals.
First, we aim to prove \emph{stability} results about MSFQ policies.
We say the system is stable under a scheduling policy if the policy achieves sufficiently high resource utilization such that the mean number of jobs in the system and the mean response time across jobs are both finite.
Second, we will analyze the mean response time under MSFQ policies and their variants both in theory and in simulation to show the advantages of these policies compared to prior approaches.
Specifically, the contributions of this paper are as follows.
\begin{itemize}
\item First, in Section \ref{sec:MSF}, we formally explain the shortcomings of the MSF policy in the one-or-all case.  Here, we show how excessively long periods of serving 1-server jobs cause a feedback effect that leads to poor mean response time.
\item In Section \ref{sec:MSFQ} we introduce the MSFQ policies, which uses the Quickswap mechanism to force the policies to switch phases faster.
Crucially, in Section \ref{sec:throughput-optimal} we show that any MSFQ policy matches the resource utilization of MSF by proving that MSFQ is \emph{throughput-optimal}.  Here, throughput-optimality means that MSFQ will stabilize the system whenever the system can be stabilized.
\item Then, in Section \ref{sec:rt}, we analyze the mean response time under MSFQ by approximating the Laplace Transform of the phase durations and number of jobs at the beginning of each phase.  While the MSFQ system resembles a polling system (See \cref{sec:rel-polling}), MSFQ cannot be analyzed using existing results from the polling literature.  Hence, our response time analysis of MSFQ also represents a new contribution to the extensive body of work on polling systems.
\item Finally, Section \ref{sec:sim} examines generalizations of MSFQ to real-world settings where jobs' server needs can vary widely.  We evaluate two generalizations of MSFQ, called \emph{Static Quickswap} and \emph{Adaptive Quickswap}, and evaluate these policies in simulation using traces from the Google Borg cluster scheduler \cite{tirmazi2020borg}.  Our simulations show that a well-designed Quickswap policy can improve mean response time by orders of magnitude.  Furthermore, Quickswap policies tend to achieve an equitable mean response time between the job classes as compared to a less fair priority policy like MSF.
\end{itemize}

\section{Related Work}

We now describe prior work on multiserver jobs from the systems and theory communities in Sections \ref{sec:rel-datacenter} and \ref{sec:rel-msj-analysis}, respectively.
We also note that the Quickswap policies analyzed in this paper bear a resemblance to prior queueing-theoretic work on polling systems.
However, because the connection between multiserver jobs and polling is somewhat indirect, we discuss the polling systems literature separately in \cref{sec:rel-polling}.

\subsection{Systems for Multiserver Job Scheduling}
\label{sec:rel-datacenter}

Modern data centers schedule multiserver jobs across thousands of machines, supporting workloads with diverse server needs and job sizes \cite{slurm,tirmazi2020borg,yarn}.
None of these schedulers make formal performance guarantees about system stability or mean response time, generally relying on heuristics to make scheduling decisions.

SLURM~\cite{slurm} is an open-source cluster scheduler used in data centers and high-performance computing environments. It uses a combination of heuristics and a variant of FCFS scheduling called BackFilling.
While this approach can improve resource utilization by running low-priority jobs opportunistically, it requires accurate predictions of job sizes to work well, and can therefore suffer from low resource utilization in practice.
Borg~\cite{tirmazi2020borg}, Google's internal resource management system for data centers, schedules batch jobs by placing incoming jobs in an FCFS queue.  Once there is enough capacity to serve the next batch job, complex heuristics are used to assign the job to a specific set of servers.
YARN~\cite{yarn}, integrated with Hadoop, supports both FCFS and other heuristic policies aimed at optimizing fairness instead of stability or mean response time.
Hence, many systems designed to schedule multiserver jobs stand to benefit from improved scheduling policies that are accompanied by formal performance guarantees.

\subsection{Multiserver Job Scheduling in Theory}
\label{sec:rel-msj-analysis}

Prior work from the theory community on multiserver jobs has mostly focused on the stability and response time analysis of FCFS.
The stability region of FCFS was studied in \cite{rumyantsev-2017,morozov, afanaseva-2020} in the case where all job sizes follow the same exponential distribution.
Subsequently, ~\cite{grosof-2023-mama, olliaro-2023} considered the case where jobs belong to one of two job classes, deriving explicit expressions for the stability region of FCFS.
For many years, mean response time analysis of FCFS was restricted to systems with just two servers \cite{brill-1984, karatza-2007}. 
However, ~\cite{grosof-2024-marcreset} recently derived explicit bounds on mean response time that are tight up to an additive constant.
Matrix geometric approaches~\cite{anggraito-mascots-2024, anggraito-cox-2025} have also recently been used to characterize the performance of FCFS systems with two job classes under specific service time distributions.
While FCFS is becoming well-understood, all of these analyses confirm that it can perform poorly in terms of both stability and mean response time.

There is comparatively little work on more complex and efficient policies that do not require job preemptions.
For example, the well-studied MaxWeight policy is throughput optimal, but requires preemption and is computationally costly to implement in practice \cite{maguluri-2012}.
Other recent work on scheduling multiserver jobs has also been restricted to the case of preemptible jobs \cite{wcfs,grosof-2022-pomacs}.
There are two prominent examples of throughput-optimal, non-preemptive policies for scheduling multiserver jobs.
First, Randomized Timers is a throughput-optimal policy based on MaxWeight that is non-preemptive \cite{psychas2018randomized}.
Unfortunately, there is no known mean response time analysis of Randomized Timers, and the policy has been shown to perform poorly in practice.
Second, \cite{chen2025improving} recently analyzed a new class of non-preemptive policies called  Markovian Service Rate (MSR) policies.
An MSR policy precomputes a set of schedules with high resource utilization, and switches between schedules according to a continuous-time Markov chain that is independent of the system state (i.e., queue lengths).
The class of MSR policies is throughput-optimal and admits an analysis of mean response time.
However, because MSR policies do not consider queue length when switching schedules, they waste capacity unnecessarily, resulting in high mean response time.
We will show that MSFQ can significantly outperform MSR policies by considering queue length when switching schedules.


\subsection{Polling Systems and Most Servers First}
\label{sec:rel-polling}

In the one-or-all setting, the MSF and MSFQ policies we study (see \cref{sec:MSF,sec:MSFQ}) are theoretically similar to a two-station polling system with exhaustive service, where the first station serves $k$-server jobs, and the second serves 1-server jobs.
Furthermore, the system incurs something like a switchover time when moving from 1-server jobs to $k$-server jobs.

The literature on polling systems is vast \cite{borst2018polling}.
The single-polling-station and infinite-polling-station systems are well-understood \cite{borst1997polling,foss1996polling},
and approximations for waiting time in the multiple-polling-station system
have been established \cite{borst1998waiting}.
Stability issues caused by switchover times have also been studied \cite{fricker1994monotonicity, foss1996dominance}.
However, our multiserver job system considers a mix of single-server and multiserver operations not found in the polling literature.
Furthermore, while MSF essentially uses an exhaustive service discipline for switching phases, the class of MSFQ policies uses a more generalized, threshold-based version of exhaustive service that is not analyzed in the prior work.
Hence, the analysis of MSFQ in this paper also serves as a new contribution to the literature on polling systems.

\section{Model}




\subsection{Multiserver Jobs}
\label{sec:MSJ}
We consider a system with $k$ servers.
A multiserver job can be represented by an ordered pair $(i, s)$, where $i\in\set{1,2,\cdots, k}$ is the number of servers the job needs in order to run and $s$ is the \emph{service duration} (also known as \emph{job size}), the time the job must run on the servers before completion.
Jobs occupy a fixed number of servers throughout their time in service,
and cannot be preempted: once started, a job must be run until it is complete.
We refer to this job model as the Multiserver Job (MSJ) model.

The MSJ model reflects the realities of scheduling in modern large-scale compute clusters.
Specifically, running jobs typically cannot be preempted because they are \emph{stateful} and preemption would destroy this working state \cite{psychas2017non}.
Additionally, the MSJ model does not aim to capture \emph{straggler effects} where a job's tasks on some servers finish earlier than others.
While the straggler effect is captured by more detailed models such as fork-join queueing models \cite{forkjoin}, these models are notoriously intractable to analyze.
Furthermore, modern systems employ a variety of techniques to mitigate the straggler effects \cite{ananthanarayanan2013effective}.
As a result, real-world systems such as Google Borg make scheduling decisions based on fixed server needs and largely ignore straggler effects within a job.

We consider workloads composed of different \emph{job classes}, where class-$i$ jobs all need $i$ servers.
We consider serving a stream of multiserver jobs, where   class-$i$ jobs arrive according to an independent Poisson process with rate $\lambda_i$.
We further assume the service durations of class-$i$ jobs are i.i.d. exponentially distributed random variables such that $S_i\sim \exp(\mu_i)$ for any class $i$.

We define an \emph{arrival rate vector} ${\bm \lambda}=(\lambda_1, \lambda_2, \cdots, \lambda_k)$ and a \emph{completion rate vector} ${\bm \mu}=(\mu_1, \mu_2, \cdots, \mu_k)$.
Let $\lambda$ denote the total arrival rate of multiserver jobs into the system.
Hence, $\lambda=\norm{{\bm\lambda}}_1$.
Let $p_i$ be the fraction of arriving jobs belonging to class $i$.
Equivalently, $p_i=\lambda_i/\lambda$.

We define a \emph{feasible schedule} as a multiset of classes of multiserver jobs that can run in parallel,
obeying the rule that the total number of servers requested does not exceed $k$.
We use ${\bf u}=(u_1, u_2, \cdots, u_k)$ to denote a feasible schedule in the multiserver system, where it puts $u_i$ class-$i$ jobs in service and
$\sum\nolimits_{i=1}^k i \; u_i\le k,$
as the total server demand cannot exceed $k$.

In the one-or-all setting, $\lambda_i=0$ for all $1< i<k$ because jobs can only request one server or all servers in the system.
In this case, $\lambda=\lambda_1+\lambda_k$.
A feasible schedule in this case can either be $u_k=1$ and $u_i=0$ for $i<k$, or $u_1\le k$ and $u_i=0$ for $i>1$.

A scheduling policy ${\bf u}(t)=(u_1(t), u_2(t), \cdots, u_k(t))$ picks a feasible schedule at every time $t$,
subject to the requirement that no job is preempted: the service policy at time $t$
must contain all jobs whose service has started but not completed by time $t$.
We allow the scheduling policy to select a $u_i(t)$ value that exceeds the
number of class-$i$ jobs available for some job class $i$.
In general, a scheduling policy may depend on the system state as well as policy-specific state.

We model the system state with a pair of vectors: The total \emph{occupancy vector} $\mathbf{n}(t)$,
where $n_i(t)$ is the number of class-$i$ jobs in the system at time $t$,
and the service vector $\mathbf{u}(t)$, chosen by the scheduling policy and discussed above.
In addition, we allow a general policy-specific Markovian state $z$.
The triple $({\bf n}, {\bf u}, z)$ represents the system state and forms a countably infinite Markov chain.

For notational convenience, in the one-or-all setting, we neglect all the 0's in the vectors and abbreviate $\mathbf{n}(t)$ as $\mathbf{n}(t)=(n_1(t), n_k(t))$ and $\mathbf{u}(t)$ as $(u_1(t), u_k(t))$.
Thus, we can represent the system state with a 5-tuple $(n_1, n_k, u_1, u_k, z)$.
\subsection{Stability Region}
For a given policy $p$, the mean response time of the system may or may not be bounded.
Equivalently, the system may or may not be positive recurrent.
Considering the system where job size distributions and job classes are fixed and we vary the arrival rates, we define the set of arrival rates such that the mean response time is finite under policy $p$ as the \emph{policy stability region} $\mathcal{C}_p$.
Formally,
$\mathcal{C}_p=\set{\bm{\lambda} \mid \E[T_p(\bm{\lambda})]<\infty},$ where $T_p(\bm{\lambda})$ is the response time of multiserver jobs under policy $p$ when the arrival rate vector is $\bm{\lambda}$.
We also define the \emph{system stability region} $\mathcal{C}$ as the set of arrival rates such that there exists a policy that stabilizes the system.
Therefore, the system stability region is the union of the stability regions of all possible policies.
When $\mathcal{C}_p=\mathcal{C}$ for some policy $p$, we say that the policy $p$ is throughput-optimal.
In other words, a policy $p$ is throughput-optimal if it stabilizes the system whenever there exists a stable policy.
\subsection{General Notation}
\label{sec:general-notation}

We define the notation $\Sigma(X, Y)$ as the independent sum of $X$ copies of the random variable $Y$:
$\Sigma(X, Y) := \sum\nolimits_{i=1}^X Y_i,$
where $X \ge 0$ is an integer-valued random variable and the $Y_i$'s are i.i.d samples of $Y$.
We define the notation $(x)^+$ as the \emph{positive part} of $x$. Specifically, $(x)^+ := \max(x, 0)$.
We define the notation $\wh{X}(z)$ as the z-transform of the probability mass function of an integer-valued random variable $X$ and $\wt{Y}(s)$ as the Laplace-Stieltjes transform of the probability density function of a continuous random variable $Y$.

\section{Policies}
\label{sec:policies}
In this section, we define the policies we are using in the paper.
We first define the Most Servers First (MSF) policy~\cite{beloglazov, maguluri-2012},
which is known to have shortcomings as discussed in \cref{sec:intro}.
Then, based on the MSF policy, we develop the MSF Quickswap (MSFQ) policies for the setting where there are only class-$1$ and class-$k$ jobs in the system.
We analyze the mean response time of an MSFQ policy in Section~\ref{sec:rt} and show response time performance improvements with analysis and simulations in Section~\ref{sec:sim}.
Inspired by the improvements obtained with the MSFQ policies, we develop the Static Quickswap policy and the Adaptive Quickswap policy,
each of which is a generalization of MSFQ to support arbitrary sets of job classes.
We show that the mean response time performance of Static Quickswap and Adaptive Quickswap each compares favorably to MSF in simulations based on real datacenter traces in Section~\ref{sec:sim}.

\subsection{Most Servers First}
\label{sec:MSF}
We analyze the Most Servers First (MSF) policy as described in~\cite{beloglazov, maguluri-2012}.
Specifically, we define MSF as a non-preemptive policy that favors jobs with the highest server demands.
Whenever a job arrives or completes, MSF tries to put as many additional jobs as possible into service, starting with the job that demands the most servers and moving in descending order of server demands.
This process ends when either all servers are utilized or MSF has considered all jobs in the queue.

In the one-or-all case where jobs either require 1 server or $k$ servers, MSF has a somewhat simpler structure.
At any moment in time, the policy either serves 1 class-$k$ job or up to $k$ class-$1$ jobs.
The two job classes are never served simultaneously.
We describe this structure by saying that MSF undergoes two \emph{phases}.
In phase 1, MSF serves exclusively class-$k$ jobs one at a time. Doing so, MSF uses all servers in the system and is thus very efficient.
In phase 2, MSF serves exclusively class-$1$ jobs.
During phase 2, there can be up to $k$ class-$1$ jobs in service, depending on how many class-$1$ jobs are in the system.
Whenever the number of class-1 jobs in service is less than $k$, some of the servers' service capacity is wasted.
MSF only switches between phases when it runs out of jobs of the current class.

Considering the phases of MSF in the two-class case demonstrates why this policy can lead to poor performance, particularly as load becomes high.
The time to complete phase 1 looks like the busy period of an $M/M/1$ system started by the jobs that arrived in the prior phase 2.
The time to complete phase 2 looks like the busy period of an $M/M/k$ system started by the jobs that arrived during the prior phase 1.
As load increases, both phases will become longer, causing more class-$1$ jobs to arrive during phase 1 and vice versa.
In this way, MSF amplifies the effect of increased load on mean response time.


\subsection{Most Servers First Quickswap (MSFQ)}
\label{sec:MSFQ}


To reduce the load-amplifying effect found in MSF, we introduce the Most Server First Quickswap (MSFQ) policies in the one-or-all setting.
The goal of MSFQ is to shorten the duration of phase 2 of the MSF policy so that fewer class-$k$ jobs are allowed to build up during this phase.
Specifically, an MSFQ policy is associated with a threshold, $\ell$, that is used to shorten the periods where class-$1$ jobs are in service.
When the number of class-1 jobs drops below $\ell$, the system stops allowing class-1 arrivals to enter service.

We define the MSFQ policy formally via the following phases with a threshold $\ell \in [0, k-1]$:
\begin{itemize}
    \item Phase 1: Serve class-$k$ jobs exclusively until none remain ($n_k=0$).
    \item Phase 2: Serve class-1 jobs until there are less than $k$ class-1 jobs in the system ($n_1<k$).
    \item Phase 3: Serve class-1 jobs until there are at most $\ell$ class-1 jobs in service ($n_1\le \ell$).
    \item Phase 4: Complete the class-1 jobs that are already in service ($n_1=0$ at the end of the phase).  New class-1 arrivals are not allowed to enter service during this phase.
\end{itemize}
At the end of phase 4, the server returns to phase 1. 
Note that phase 2 and phase 3 are similar, but it will be convenient to treat them separately in our analysis.

By analyzing response time of the MSFQ system in \Cref{sec:rt}, we find that the mean response time is dependent on the number of jobs in the system at the beginning of the phase, and the \emph{phase duration} --- the amount of time from when the system enters phase $i$ until phase $i$ completes.
We let random variable $N_i^L$ denote the number of light (class-1) jobs and $N_i^H$ denote the number of heavy (class-$k$) jobs at the beginning of phase $i$.
We also let the random variable $\phase_i$ denote the duration of the $i$th phase.

Note that, when $\ell=0$, our MSFQ policy is the same as the MSF policy.
We show that the class of MSFQ policies, irrespective of the threshold $\ell$, is throughput optimal in Section~\ref{sec:throughput-optimal}.
The intuition behind throughput-optimality follows from the fact that we always fully utilize all the servers outside of the switchover period (phases 3 and 4) when there are enough jobs waiting in the system.
By searching over the choices of $\ell$, we show that our MSFQ policy offers significant performance improvement over MSF.
This confirms that the load-amplifying effect found in MSF has a huge impact on response time performance.

\subsection{Static Quickswap}
Our one-or-all MSFQ analysis relies heavily on the fact that there are only two classes of jobs in the system.
The key difficulty in generalizing the MSFQ policies to arbitrary sets of job classes lies in selecting the next job to serve, once the phase corresponding to the current class of jobs is complete.

We therefore define the \emph{Static Quickswap} scheduling policy,
which cycles through all classes of jobs in a fixed order, with the following two phases for each class of jobs, $i$:
\begin{itemize}
    \item Working Phase: In class $i$'s working phase, 
    we serve class-$i$ jobs exclusively until the number of idle servers exceeds $k-\ell$.
    Formally, in class $i$'s working phase we set $u_i=\lfloor k/i \rfloor$, and we set $u_j=0$ for all $j\neq i$.
    \item Draining Phase: During class $i$'s draining phase, we complete the class-$i$ jobs that are still in service at the end of the working phase. New class-$i$ arrivals are not allowed to enter service during this phase.
\end{itemize}
When a given class's draining phase is complete, the next class in the cycle is served.
We do not focus on the choice of cyclic ordering of the phases -- we leave studying the effects of that ordering to future work.


In \cref{rem:general-stability},
we give a proof sketch that the Static Quickswap policy achieves optimal stability region whenever all classes of jobs perfectly divide $k$, and thus fully utilize the $k$ servers.

We will empirically show that the response time performance of Static Quickswap compares favorably to MSF in Section~\ref{sec:sim}.

\subsection{Adaptive Quickswap}
Inspired by MSFQ and Static Quickswap, we develop a policy called Adaptive Quickswap,
which allows multiple classes of jobs at the same time.
In the general setting, if the number of servers required by a class does not perfectly divide the total number of servers,
the system cannot fully utilize the server when serving some class of jobs exclusively.
Hence, our Adaptive Quickswap policy prioritizes jobs to serve in MSF order
and switches when it finds that serving these jobs becomes inefficient. 
The Adaptive Quickswap policy first admits jobs according to MSF order and then operates according to the following phases:
\begin{itemize}
    \item Working Phase: Whenever servers become available, the job in the queue with the largest server need that is at most the number of unoccupied servers is admitted to service.
    This continues until the quickswap is triggered.
    \item Quickswap trigger: Switch from the working phase to the draining phase
    when there is a job class that is in the queue and not in service,
    and every job class in service has no jobs of that class waiting to receive service.
    \item Draining Phase: No jobs may enter service,
    except for the job in the queue with the largest server need.
    Once this job has entered service, switch to the working phase.
\end{itemize}


\section{Analysis of the MSFQ Policy}
\label{sec:results}

In this section, we analyze the Most Servers First Quickswap (MSFQ) policies under the one-or-all setting, where there are only class-1 and class-$k$ jobs.
We call the class-1 jobs light jobs and the class-$k$ jobs heavy jobs.
We begin by proving that any MSFQ policy is throughput-optimal in the simplified setting.

\begin{restatable}[MSFQ Throughput-Optimality]{theorem}{msfqstability}
        \label{thm:opt-stability}
        In the one-or-all MSJ system,
        Most Servers First with Quickswap (MSFQ) policies
        have optimal stability region
        for all thresholds $\ell$, $0\le \ell < k$.
    \end{restatable}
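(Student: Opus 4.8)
The plan is to establish throughput-optimality in two parts: an outer bound showing no policy can stabilize the system outside a certain region $\mathcal{C}$, and a matching inner bound showing MSFQ stabilizes everything strictly inside $\mathcal{C}$. For the outer bound, I would argue via a work-conservation / server-utilization argument: large jobs need all $k$ servers, so whenever a large job runs, no small job can run; conversely, at most $k$ small jobs run at once. Over any long interval the fraction of time spent serving large jobs is at least $\lambda_k/\mu_k$ (each large job occupies the whole system for mean time $1/\mu_k$), and during the remaining fraction the small jobs receive service rate at most $k\mu_1$. Balancing the flow of small work, stability requires $\lambda_1/(k\mu_1) + \lambda_k/\mu_k < 1$; this is the defining inequality of $\mathcal{C}$, and it holds for \emph{any} policy by a straightforward rate-conservation (or fluid-limit) argument. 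I would state this as the characterization of $\mathcal{C}$ in the one-or-all case and note it is essentially folklore (cf.\ the MSF discussion in \cref{sec:MSF}).

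For the inner bound --- the substantive direction --- I would fix $\bm\lambda$ with $\lambda_1/(k\mu_1) + \lambda_k/\mu_k < 1$ and show the Markov chain $(n_1,n_k,u_1,u_k,z)$ is positive recurrent. The cleanest approach is a \emph{regenerative / cycle} argument: the system returns to the state ``enter phase 1 with no jobs in service'' infinitely often, and these instants are (almost) regeneration points. I would show the expected cycle length is finite and the expected number of jobs in the system stays bounded in expectation over a cycle, using the following estimates. Phase~1 is a busy period of an $M/M/1$-like system started by the $N_1^L$ large jobs present; since $\lambda_k/\mu_k<1$, its expected length is $\mathrm{O}(N_1^L)$. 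Phases~2 and~3 together drain small jobs down to $\le\ell$ starting from $N_2^S$; since during these phases $k$ servers serve small jobs at rate $k\mu_1 > \lambda_1 \cdot (\text{something})$... more carefully, the small-job count behaves like an $M/M/k$ queue being drained, and because $\lambda_1 < k\mu_1$ the drift is negative, giving expected duration $\mathrm{O}(N_2^S)$. Phase~4 has bounded expected length (at most $k$ in-service jobs each with mean size $1/\mu_1$, no new admissions), so its length is $\mathrm{O}(1/\mu_1)$ regardless of state. The key coupling is that the number of large (resp.\ small) jobs accumulated during the non-large (resp.\ non-small) phases is controlled by the phase length times the arrival rate, so one gets a linear recursion on the expected queue lengths at phase boundaries with contraction factor governed exactly by $\lambda_1/(k\mu_1) + \lambda_k/\mu_k < 1$.

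Concretely, I would set up a Lyapunov/drift argument at the phase-boundary epochs: let $V = a\,n_1 + b\,n_k$ for suitable positive constants $a,b$ (or use $N_i^S + c N_i^L$ at phase starts), compute the expected change in $V$ over one full cycle of phases 1--4 as a function of the starting queue lengths, and show it is $-\epsilon(n_1+n_k) + B$ for constants $\epsilon>0$, $B<\infty$ whenever $\bm\lambda$ is strictly interior to $\mathcal{C}$. Foster--Lyapunov (applied to the embedded chain at phase boundaries, then lifted to continuous time via bounded expected inter-epoch times) then yields positive recurrence. The main obstacle I anticipate is handling the ``fast-switching'' edge cases cleanly --- e.g.\ when $n_1$ or $n_k$ is small, phases can be degenerate (phase 1 empty if $n_k=0$ on entry, phases 2--3 nearly empty if $n_1<k$ on entry) --- so the drift estimate must be uniform and not rely on phases being long; this is exactly where the $\mathrm{O}(1)$ bound on phase 4 and the linear (not merely finite) bounds on phases 1--3 are essential. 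I would also need to confirm that $\ell < n$ (here $n = k$, or $n$ the relevant bound stated) guarantees phase 3 is well-defined and that the stated threshold range doesn't break the cycle structure; since phase 4 always drains to $n_1 = 0$ in service and large jobs are served exhaustively in phase 1, the cycle structure is robust for every $\ell \in [0, k)$, which is precisely the uniformity over $\ell$ claimed in the theorem.
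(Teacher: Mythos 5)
Your two-sided structure (work-conservation outer bound plus Foster--Lyapunov inner bound) matches the paper's, and your outer-bound argument is essentially identical to the paper's \cref{thm:neg-stability}. The inner bound, however, goes by a genuinely different route. The paper applies the continuous-time Foster--Lyapunov theorem directly to the chain $(n_1,n_k,u_1,u_k,z)$, using the Lyapunov function $V = \frac{n_1}{k\mu_1} + \frac{n_k}{\mu_k} + \mathbbm{1}\{z=2\}\frac{\epsilon n_k}{2\lambda_k} + \mathbbm{1}\{z\notin\{1,2\}\}(c^k - c^{k-u_1})$: the two workload terms give drift $-\epsilon$ while all $k$ servers are busy, and the indicator terms deliberately sacrifice half that drift in phase 2 to accumulate ``potential'' that is then spent in phases 3--4 to compensate for the idle servers. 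Your plan instead embeds the chain at phase-1 starts, uses a linear Lyapunov $V = a\,n_1 + b\,n_k$ over a full cycle, and exploits the per-cycle contraction whose factor --- as you correctly identify --- is governed exactly by $\lambda_1/(k\mu_1) + \lambda_k/\mu_k < 1$. Both routes are valid; yours has the pedagogical advantage of making the contraction visible, whereas the paper's avoids any passage through an embedded chain, which in turn makes its \cref{rem:general-stability} extension to Static Quickswap essentially free.

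The one place where your write-up papers over a real issue is the phrase ``lifted to continuous time via bounded expected inter-epoch times.'' The cycle lengths are \emph{not} bounded --- $\E[H_1]$ scales linearly in $N_1^L$, and $\E[H_2+H_3]$ scales linearly in $N_2^S$. What you actually need is that the embedded chain's stationary distribution has finite first moment (so that the stationary-averaged cycle length is finite), and then the standard semi-Markov / cycle-formula argument lifts positive recurrence to the CTMC. The good news is that your own drift estimate $\E[\Delta V\mid n_1,n_k] \le -\epsilon(n_1+n_k) + B$ is exactly the $f$-norm drift condition (with $f$ linear) that delivers this finite first moment, so the argument closes --- but you should say that explicitly rather than appeal to boundedness. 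You also need to be careful that the constants $a,b$ can be chosen uniformly in $\ell$, and that the degenerate cycles (phase 1 skipped when $n_k=0$, phases 2--3 nearly skipped when $n_1<k$) only help the drift rather than hurt it; the paper handles the analogous boundary cases by packaging them into the finite exception set of the Foster--Lyapunov theorem.
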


We then analyze the mean response time under MSFQ.
We observe that the mean response time depends on two central factors: the amount of time the policy spends in each phase, and how many jobs are in the system at the beginning of each phase.
We therefore approximate the Laplace-Stieltjes transforms of the distributions of phase durations and the z-transforms of the number of jobs distributions at the beginning of each phase.
We then show how to use these transforms to approximate the mean response time for light and heavy jobs.
These three steps give the following summary theorem regarding $\E[T]$.

\begin{restatable}[MSFQ Summary Theorem]{theorem}{msfqsummary}
\label{thm:summary}
The mean response time under MSFQ, $\E[T]$, depends on the first and second moments of $\phase_i$ and $N_i$ for all phases $i$.
Hence, one can compute $\E[T]$ using the transforms $\wh{N_i^L}(z)$ and $\wh{N_i^H}(z)$ for all $i$ using Lemmas~\ref{lem:fraction}-\ref{thm:absorption}.
\end{restatable}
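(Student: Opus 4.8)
The plan is to prove the Summary Theorem as a synthesis result: reduce $\E[T]$ to the first two moments of the phase durations $\phase_i$ and the phase-boundary occupancies $N_i^S, N_i^L$, and then invoke Lemmas~\ref{lem:fraction}--\ref{thm:absorption} to express those moments through the z-transforms $\wh{N_i^S}(z)$ and $\wh{N_i^L}(z)$. First I would write $\E[T] = p_1\,\E[T^S] + p_k\,\E[T^L]$, where $T^S$ and $T^L$ are the stationary response times of small and large jobs, which exist inside the stability region by \cref{thm:opt-stability}. By PASTA, a tagged arrival of a given class sees the time-stationary state, which is a mixture over the current phase (mixing weights proportional to $\E[\phase_i]$) and, within a phase, a partially elapsed phase carrying the usual residual-life bias --- this is where $\E[\phase_i^2]$ enters. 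The tagged job's response time is then the residual of the current phase, plus the full durations of all subsequent phases up to and including the one serving its class, plus the extra delay contributed by same-class jobs ahead of it; by PASTA and the phase dynamics the count of such jobs is an affine function of the $N_i^S, N_i^L$ together with Poisson arrivals over a biased partial phase. Collecting terms shows that $\E[T^S]$, $\E[T^L]$, and hence $\E[T]$, are functions only of $\{\E[\phase_i], \E[\phase_i^2]\}_i$ and $\{\E[N_i^S], \E[N_i^L], \E[(N_i^S)^2], \E[(N_i^L)^2]\}_i$.

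Next I would connect the phase-duration moments back to the occupancy transforms. Conditioned on its starting occupancy, each phase duration is an absorption time of an elementary birth--death process: phase~1 begun with $N_1^L$ large jobs is an $M/M/1$-type busy period with parameters $(\lambda_k,\mu_k)$ drained to $0$; phases~2 and~3 begun with $N_i^S$ small jobs are first-passage times of an $M/M/k$-type chain from $N_i^S$ down to the relevant threshold; and phase~4 is the completion time of the at most $\ell$ running small jobs with admissions blocked. Writing $\wt{B}(s)$ for the transform of the single-job contribution to such a time, one obtains compositions of the form $\wt{\phase_i}(s) = \wh{N_i}\!\left(\wt{B}(s)\right)$, so differentiating twice at $s = 0$ yields $\E[\phase_i]$ and $\E[\phase_i^2]$ in terms of the first two moments of $N_i$ and the (elementary, known) moments of $\wt{B}$. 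This is precisely the role of \cref{thm:absorption} and the neighboring lemmas.

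It then remains to obtain the occupancy transforms themselves, which is where the cyclic fixed-point structure enters. The occupancy at the start of phase $i+1$ equals the carried-over occupancy of the non-served class at the start of phase $i$ plus the Poisson arrivals of that class over $\phase_i$, with the appropriate positive-part truncations at phase boundaries; hence $\wh{N_{i+1}^S}$ and $\wh{N_{i+1}^L}$ are determined by $\wh{N_i^S}$, $\wh{N_i^L}$, and $\wt{\phase_i}$ via standard composition identities. Chaining the four phases collapses everything to a single fixed-point equation for $(\wh{N_1^S}, \wh{N_1^L})$, whose (approximate) solution --- supplied by Lemmas~\ref{lem:fraction}--\ref{thm:absorption} --- delivers all of $\wh{N_i^S}(z)$ and $\wh{N_i^L}(z)$. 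Differentiating at $z = 1$ gives every first and second moment of $N_i^S, N_i^L$; feeding these through the phase-duration compositions gives the $\phase_i$ moments; and substituting both into the tagged-job formulas computes $\E[T]$, which is the claimed reduction.

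The main obstacle is the tagged-job decomposition around the quickswap boundary (phases~3 and~4): unlike exhaustive polling, a small job arriving in phase~3 or~4 may or may not be among the at most $\ell$ jobs that complete in the current cycle, so its delay branches on a state-dependent event rather than a clean \say{wait until the next visit}, while on the large-job side the system sees a switchover-like window of random length. Getting this bookkeeping right --- and deciding how to control the cross-dependence between $N_i^S$ and $N_i^L$, since the cited lemmas are phrased in terms of the two marginal transforms --- is the delicate part, and it is also the source of the approximation flagged in the theorem statement: the exact fixed point appears intractable, so one instead closes the system with a controlled approximation of the boundary behavior.
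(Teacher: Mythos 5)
Your high-level plan matches the paper's architecture: decompose $\E[T]$ by arrival phase and class, relate phase durations to phase-boundary occupancies via busy-period transform compositions, chain phases into a fixed-point system for the occupancy transforms, and reduce everything to first and second moments. But several steps as you sketch them would not go through. The assertion that a tagged arrival sees phase $i$ with weight proportional to $\E[\phase_i]$ is not a direct consequence of ordinary renewal-reward, because MSFQ phase cycles are not i.i.d. --- a long cycle carries many leftover arrivals into the next cycle, so consecutive cycles are positively correlated. The paper proves an extended renewal-reward theorem (\cref{thm:extended-rr}) precisely to justify \cref{lem:fraction}; your PASTA argument uses this fact without establishing it.

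Second, your ``residual of current phase $+$ subsequent phases $+$ same-class work'' decomposition only applies cleanly to arrivals that must wait for their service phase to begin, i.e., large jobs in phases 2--4 and small jobs in phases 1 or 4 (\cref{thm:excess}). For a job arriving during its own service phase ($\E[T_1^L]$, $\E[T_2^S]$) the work ahead is \emph{not} ``initial occupancy plus Poisson arrivals over a biased partial phase'' --- departures occur throughout the phase, so that count is not affine in $N_i$. The paper handles this by exhibiting the phase as an $M/G/1$ with Exceptional First Service and invoking the known EFS stationary-work formula (\cref{remark:ex-first}, \cref{thm:exceptional-first}); you supply no equivalent device. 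Third, for $\E[T_3^S]$ the residual-life machinery does not apply: phase~3 is a state-dependent birth--death process that the paper analyzes by counting visits to each state of an absorbing chain (\cref{thm:absorption}). You correctly flag this as delicate but offer no method for it. Finally, the approximation the paper actually invokes (\cref{sec:rt:approx}) is that no phase is skipped --- at least one large job at the start of phase 1 and at least $k$ small jobs at the start of phase 2 --- not a concession to an intractable fixed point; the moment system arising from \cref{lemma:cycle-length}--\cref{lem:h4} is a finite system obtained by differentiating the transforms and is solved directly.
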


All of our analysis applies to the original MSF policy,
by setting the Quickswap parameter $\ell$ to 0.

\subsection{Throughput-optimality of MSFQ}
\label{sec:throughput-optimal}
    We now prove that MSFQ policies achieve optimal stability region,
    for any Quickswap threshold parameter $\ell$.
    We start by lower-bounding its stability region:
    \begin{restatable}{theorem}{stability}
        \label{thm:pos-stability}
        In the one-or-all MSJ system, the Most Servers First with Quickswap policies are positive recurrent for all thresholds $\ell$, $0\le \ell < k$, whenever
        $
            \frac{\lambda_1}{k \mu_1} + \frac{\lambda_k}{\mu_k} < 1.
        $
    \end{restatable}
    \begin{proof}
        We prove this theorem via the Foster-Lyapunov theorem with a carefully designed Lyapunov function.  See \DIFnomarkup{\ref{app:stability}} for a full proof.
     \end{proof}

    We now upper bound the optimal possible stability region in the one-or-all system:

    \begin{theorem}
        \label{thm:neg-stability}
        In the one-or-all MSJ system,
        no scheduling policy is stable if
        $
            \lambda_1/k \mu_1 + \lambda_k/\mu_k \ge 1,
        $
    \end{theorem}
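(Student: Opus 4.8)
The plan is to prove the claim by contradiction, using the fact that a stable system must balance the throughput of each job class against its arrival rate. Suppose some policy $p$ is stable, i.e.\ $\mathbb{E}[T_p]<\infty$. By Little's law the stationary mean number of jobs in the system is finite, so the Markov chain $(n_1,n_k,u_1,u_k,z)$ is positive recurrent with a stationary distribution $\pi$ of finite first moments; all probabilities and expectations below are taken under $\pi$. Recall from \cref{sec:MSJ} that $u_k\le 1$, $u_1\le k$, and $u_1=0$ whenever $u_k=1$, since a class-$k$ job and class-$1$ jobs are never run simultaneously.

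First I would bound the two throughputs. Because $n_k(t)$ is a stationary process with finite mean, its long-run departure rate equals its arrival rate: a class-$k$ job departs at rate $\mu_k$ exactly when one is in service, so $\lambda_k=\mu_k\Pr(u_k=1,\ n_k\ge 1)\le \mu_k\Pr(u_k=1)$. Likewise the long-run departure rate of class-$1$ jobs equals $\lambda_1$; since the number of class-$1$ jobs actually receiving service is $\min(u_1,n_1)\le u_1\le k$ and equals $0$ whenever $u_k=1$, we get $\lambda_1=\mu_1\,\mathbb{E}[\min(u_1,n_1)]\le k\mu_1\Pr(u_k=0)$. Dividing the first inequality by $\mu_k$, the second by $k\mu_1$, and adding gives $\frac{\lambda_1}{k\mu_1}+\frac{\lambda_k}{\mu_k}\le \Pr(u_k=0)+\Pr(u_k=1)=1$, which already contradicts the hypothesis whenever the sum is strictly greater than $1$.

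The remaining, and I expect hardest, case is the critically loaded boundary $\frac{\lambda_1}{k\mu_1}+\frac{\lambda_k}{\mu_k}=1$, where every inequality above is forced to be an equality. Equality makes $\Pr(u_k=1,\ n_k=0)=0$ and $\min(u_1,n_1)=k$ almost surely on $\{u_k=0\}$; equivalently, writing $Y(t):=\tfrac1k\min(u_1(t),n_1(t))+u_k(t)\,\mathbbm{1}\{n_k(t)\ge1\}$ — which always satisfies $Y(t)\le1$ — we have $Y(t)=1$ almost surely, i.e.\ the system never leaves any server idle. Now set $M(t):=\frac{n_1(t)}{k\mu_1}+\frac{n_k(t)}{\mu_k}\ge 0$. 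Its generator drift equals $\frac{\lambda_1}{k\mu_1}+\frac{\lambda_k}{\mu_k}-Y(t)=1-Y(t)=0$ a.s., and $GM$ is bounded (each transition changes $M$ by at most $\max\{1/(k\mu_1),\,1/\mu_k\}$ and the total transition rate from any state is at most $\lambda+\max\{k\mu_1,\mu_k\}$), so, started from $\pi$, the process $M(t)$ is a genuine nonnegative martingale bounded in $L^1$. By the martingale convergence theorem $M(t)$ then converges almost surely to a finite limit. But $M$ is norm-like (its sublevel sets are finite) and the chain is recurrent on an infinite state space — from any state the next event is a class-$1$ arrival with positive probability, so states with arbitrarily large $n_1$ lie in the recurrent class — hence $\limsup_{t\to\infty}M(t)=\infty$ almost surely, a contradiction. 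Therefore no stable policy exists once $\frac{\lambda_1}{k\mu_1}+\frac{\lambda_k}{\mu_k}\ge1$.

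The one step deserving care in the final write-up is the passage from stability to the throughput equalities: it is the standard consequence of $n_i(t)$ being stationary with finite mean, so that $\tfrac{d}{dt}\mathbb{E}[n_i(t)]=0$ equates input and output rates, and it is also what pins down the equality case, together with the mild regularity ($\mathbb{E}[M(0)]<\infty$, bounded generator drift) needed to run the martingale argument. Everything else is routine algebra. Setting $\ell=0$ recovers MSF, so this bound together with \cref{thm:pos-stability} establishes the optimal stability region asserted in \cref{thm:opt-stability}.
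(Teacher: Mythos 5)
Your proof is correct but takes a genuinely different route from the paper's. The paper sketches a \emph{work-conservation / coupling} argument: define work as (server-need)$\times$(mean size)$/k$, note the system can never drain work faster than rate $1$, and compare against a resource-pooled $M/G/1$ whose load is $\lambda_1/(k\mu_1)+\lambda_k/\mu_k$; instability at $\rho\ge1$ follows from the known behavior of $M/G/1$. Your argument instead uses \emph{flow balance} on job counts (in steady state, departure rate equals arrival rate per class) combined with the scheduling constraint $u_1\le k(1-u_k)$ to conclude $\lambda_1/(k\mu_1)+\lambda_k/\mu_k\le1$, then treats the critical boundary separately with a martingale-convergence contradiction. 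Both are valid; yours is more self-contained and explicit, and it also makes clear what exactly ``breaks'' at the boundary (zero drift of the work-weighted job count), whereas the paper's dominance argument handles $\rho=1$ for free by inheriting the null recurrence of the critically loaded $M/G/1$. One small imprecision worth fixing in a write-up: the aside that ``$M$ is norm-like (its sublevel sets are finite)'' is not quite right, since the policy-specific state $z$ can range over an infinite set, so a sublevel set of $M$ need not be finite. It is not needed: your other observation --- that states of arbitrarily large $n_1$ are reachable and hence in the recurrent class, so they are visited infinitely often and $\limsup_t M(t)=\infty$ a.s. --- already delivers the contradiction with $M(t)\to M_\infty<\infty$ a.s.
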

    \begin{proof}
        Define a job's \emph{work} as the product of server need and mean service duration,
        scaled down by $k$.
        The system can never complete work at rate above 1, because there are $k$ servers.
        Work arrives to the one-or-all system at a rate of $\lambda_1/k \mu_1 + \lambda_k/\mu_k$.
        If this rate is 1 or more, the system cannot be stable.
        This argument can be further formalized by comparing with a resource-pooled M/G/1.
    \end{proof}

    The throughput-optimality of the MSFQ policy now follows
    by combining \cref{thm:pos-stability,thm:neg-stability}:
    \msfqstability*

     \begin{remark}
        \label{rem:general-stability}
     Using similar arguments to Theorems \ref{thm:pos-stability} and \ref{thm:neg-stability}, we can also bound the stability region of the Static Quickswap policy with arbitrary job classes.
     The main difference in this case is that the number of servers required by class-$j$ jobs may not perfectly divide the total number of servers, $k$, leading to wasted capacity.
     Hence, in this general case, the sufficient condition for stability becomes
     $
         \sum\nolimits_j \frac{\lambda_j}{\lfloor k/j \rfloor \mu_j} < 1,
     $
     where the floor function accounts for the wasted capacity when serving class-$j$ jobs.
     The necessary condition for stability from \Cref{thm:neg-stability} remains unchanged:
     $
         \sum\nolimits_j \frac{\lambda_j}{(k/j) \mu_j} < 1.
     $
      Hence, unless $j$ divides $k$ for all classes, the Static Quickswap policy is not throughput-optimal in this general case.
     \end{remark}

\subsection{Approximations in Response Time Analysis}
\label{sec:rt:approx}

To make the MSFQ system more tractable to response time analysis, we assume in \cref{sec:rt,sec:cycle-analysis} that there is at least 1 heavy job in the system at the beginning of phase 1, and at least $k$ light jobs in the system at the beginning of phase 2.
This approximation ensures that all the phases are not skipped in a cycle, thus making the system easier to analyze despite being highly accurate as shown in \Cref{sec:sim}.
Intuitively, when the system load gets high, there would be at least 1 heavy job in the system at the beginning of phase 1 and at least $k$ light jobs in the system at the beginning of phase 2 with high probability.

\subsection{Response time analysis}
\label{sec:rt}
Given that the MSFQ policies are throughput-optimal, we would like to analyze the mean response time of a stable MSFQ system.
To analyze $\E[T]$, we analyze the conditional response time of a light job or a heavy job given the phase in which it arrives.

Specifically, in \Cref{thm:exceptional-first} we will analyze the conditional mean response time, $\E[T^H_1]$,
of a heavy job that arrives in phase 1,
and in \Cref{thm:excess} the conditional mean response time, $\E[T_{2,3,4}^H]$, of a heavy job that arrives in any of phases 2, 3, or 4.
Similarly, we analyze the conditional mean response times $\E[T^L_{1,4}], \E[T^L_2],$ and $\E[T^L_3]$ of light jobs that arrive in either phases 1 or 4, phase 2, and phase 3, respectively.
We also analyze $m_i$, the fraction of time the system spends in phase $i$.

We then characterize the mean response time under MSFQ, $\E[T]$, as follows:
\begin{align}
    \E[T] =\frac{\lambda_k}{\lambda}\left(\E[T_1^H]m_1+\E[T_{2,3,4}^H](m_2+m_3+m_4)\right)+\frac{\lambda_1}{\lambda}\left(\E[T_{1,4}^L](m_1+m_4)+\E[T_2^L]m_2+\E[T_3^L]m_3\right).\label{eq:er}
\end{align}

Our analysis of $\E[T]$ has three steps.  
First, in \cref{lem:fraction}, we show that each $m_i$ can be computed as a function of $\E[H_i]$, the mean duration of phase $i$.
Second, in \cref{thm:exceptional-first,thm:excess,thm:absorption}, we derive explicit expressions for the conditional mean response times listed above,
and show that these expressions depend on just the first and second moments of $\phase_i$ and $N_i$ for all $i$.
Hence, to compute $\E[T]$ it suffices to find the transforms, $\wh{N_i^L}(z)$, $\wh{N_i^H}(z)$, and $\wt{\phase_i}(s)$ for all $i$.
Third, we compute the necessary transforms in Section~\ref{sec:cycle-analysis}.


We begin by showing in \cref{lem:fraction} that the fraction of time $m_i$ that the system spends in state $i$
is proportional to the phase length $\E[H_i]$.

    
    \newcommand{\reward}{R}
        
        
    

\begin{restatable}{lemma}{fractionm}
\label{lem:fraction}
The fraction of time the MSFQ system spends in phase $i$ is
    $m_i=\frac{\E[\phase_i]}{\sum_{i=1}^4 \E[\phase_i]}$.
\end{restatable}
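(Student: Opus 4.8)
The plan is to apply the Extended Renewal Reward Theorem (\cref{thm:extended-rr}) to the continuous-time MSFQ Markov chain. Take $\{I_t\}_{t\ge 0}$ to be the process recording the full system state $(n_1, n_k, u_1, u_k, z)$, let the distinguished events $\{K_n\}_{n\ge 1}$ be the instants at which the system enters phase 1 (the boundaries between successive four-phase cycles), and fix a target phase $i \in \{1,2,3,4\}$ with reward $R(t)$ equal to the total time the system has spent in phase $i$ up to time $t$. With this choice, $X_n = K_{n+1} - K_n$ is the duration of the $n$th cycle and $R_n = R(K_{n+1}) - R(K_n)$ is the length of phase $i$ within that cycle, which under the no-skipped-phase approximation of \cref{sec:rt:approx} is exactly one copy of $\phase_i$.

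Next I would verify the hypotheses of \cref{thm:extended-rr}. Under the stability assumption of \cref{thm:pos-stability} the MSFQ chain is positive recurrent, so cycles are a.s. finite and $\E[X] < \infty$; since $0 \le R_n \le X_n$, this also gives $\E[R] < \infty$. For the conditional-i.i.d. requirement, note that at each $K_n$ the state has $z = 1$, $u_k = 1$, $u_1 = 0$, so the only randomness in $Y_n = I_{K_n}$ is the pair of queue lengths $(n_1, n_k) = (N_1^S, N_1^L)$. By the strong Markov property, the evolution of the chain after $K_n$ --- hence both $X_n$ and $R_n$ --- depends on the past only through $Y_n$, and given $Y_n = y$ its distribution does not depend on $n$; this is precisely the condition $[X_i \mid Y_i = y] \sim [X_j \mid Y_j = y]$ and $[R_i \mid Y_i = y] \sim [R_j \mid Y_j = y]$ demanded by the theorem.

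Applying \cref{thm:extended-rr} then yields $R(t)/t \to \E[R]/\E[X]$ almost surely. The left-hand side equals $m_i$ by ergodicity of the positive recurrent chain. For the right-hand side, $\E[R] = \E[\phase_i]$ directly, and since each cycle runs phases $1, 2, 3, 4$ consecutively (again using the assumption that no phase is skipped), $X_n = \sum_{j=1}^4 \phase_j$ within cycle $n$, so $\E[X] = \sum_{j=1}^4 \E[\phase_j]$ by linearity. Combining gives $m_i = \E[\phase_i] / \sum_{j=1}^4 \E[\phase_j]$, as claimed.

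The step I expect to be the main obstacle is verifying the conditional-i.i.d. hypothesis cleanly --- in particular, arguing that the cycle-start state $Y_n$ (the queue-length pair together with the phase label $z=1$) is a sufficient statistic for the distribution of the remaining cycle, and confirming that the quantities $\E[\phase_i]$ appearing elsewhere in the paper are exactly the stationary per-cycle means that the theorem produces. A secondary point is that the identity $\E[X] = \sum_j \E[\phase_j]$ relies on the \cref{sec:rt:approx} assumption that every cycle visits all four phases; without it one would need to track indicators for skipped phases.
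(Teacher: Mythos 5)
Your proposal is correct and follows essentially the same route as the paper: both apply \cref{thm:extended-rr} with the distinguished events taken to be the starts of phase 1 and the reward function taken to be the time spent in phase $i$. The paper's proof is a one-line citation of the theorem; you simply fill in the hypothesis verification (strong Markov property for the conditional-i.i.d. condition, positive recurrence for finiteness) and the identification $\E[X]=\sum_j\E[\phase_j]$, which are implicit in the paper's treatment.
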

\begin{proof}
Follows directly from Palm inversion \cite{leboudec2010performance}.
\end{proof}

Next, we will show that the conditional mean response times in \eqref{eq:er} depend only on the first and second moments of $H_i$ and $N_i$ for each phase.
We handle $\E[T_1^H]$ and $\E[T_2^L]$ in \Cref{thm:exceptional-first}, $\E[T_{2,3,4}^H]$ and $\E[T_{1,4}^L]$ in \Cref{thm:excess}, and $\E[T_3^L]$ in \Cref{thm:absorption}.


To analyze $\E[T_1^H]$ and $\E[T_2^L]$, we relate these terms to an \emph{$M/G/1$ with Exceptional First Service} (EFS) system~\cite{exceptional-first} as defined in \Cref{remark:ex-first}.

\begin{restatable}[EFS system]{remark}{exceptionalfirst}
    \label{remark:ex-first}
    As stated in~\cite{exceptional-first}, an \emph{$M/G/1$ with Exceptional First Service} (EFS) system serves two different classes of jobs.
    Normally, job sizes are drawn i.i.d. according to some job size distribution $S$.
    However, the first job in each busy period experiences exceptional first service, and has a job size distributed as $S'$.
    Let $\E[W^{EFS}(\lambda, S, S')]$ be the mean work in an EFS system with arrival rate $\lambda$.  From~\cite{exceptional-first}, we have
    \begin{align*}
        E[W^{EFS}(\lambda, S, S')] &= \frac{\lambda E[S^2]}{2(1-\lambda\E[S])} + \frac{\lambda (E[S'^2] - E[S^2])}{2(1-\lambda\E[S] + \lambda E[S'])}.
    \end{align*}
    Let $p^{EFS}(\lambda, S, S')$ be the probability that a job arrives to an empty system and experiences exceptional service.
    We have that
    \begin{align*}
        p^{EFS}(\lambda, S, S') = \frac{1 - \lambda \E[S]}{1 - \lambda\E[S] + \lambda E[S']}.
    \end{align*}
\end{restatable}


%
%
\begin{restatable}{lemma}{r1lr2s}
\label{thm:exceptional-first}
    The mean response time of heavy jobs arriving into phase 1, $\E[T^H_1]$, and the mean response time of light jobs arriving into phase 2, $\E[T^L_2]$, can be characterized as follows:
    \begin{align*}
        \begin{cases}
        \E[T^H_1] &= \frac{\E[W^{EFS}(\lambda_k, S_k, \sum(N_1^H,  S_k))]}{1-p^{EFS}(\lambda_k, S_k, \sum(N_1^H,  S_k))} + \frac{1}{\mu_k} \vspace{0.5em} \\
        E[T_2^L] &= \frac{\E[W^{EFS}(\lambda_1, S_1/k, \sum(N_2^L-k+1, S_1/k))]}{1-p^{EFS}(\lambda_1, S_1/k, \sum(N_2^L-k+1, S_1/k)))} + \frac{1}{\mu_1}
        \end{cases},
    \end{align*}
    where $\Sigma(X,  Y)$ is defined as $\sum\nolimits_{i=1}^X Y_i$.
    These expressions only depend on the first and second moments of $S_1$, $S_k$, $N_1^H$, and $N_2^L$.
\end{restatable}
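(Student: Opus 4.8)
The plan is to reduce each of the two conditional mean response times to the workload of an \emph{$M/G/1$ with exceptional first service} (EFS) queue during a busy period, and then read off the answer from the formulas in \cref{remark:ex-first}. Two structural facts drive the reduction. First, during phase~1 the server works continuously on large jobs, in arrival order, from the instant phase~1 begins until $n_k$ returns to $0$; hence a large job that arrives during phase~1 is in the system the whole time this holds, and its response time equals the remaining large-job work it finds on arrival plus its own service time $S_k$. Second, throughout phase~2 there are at least $k$ small jobs with exactly $k$ in service, so small jobs complete at rate $k\mu_1$; hence a small job arriving in phase~2 that finds $n_1$ small jobs present waits for exactly $n_1-k+1$ completions, each $\exp(k\mu_1)$-distributed, before entering service, after which it runs for $S_1$. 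Because large (resp.\ small) arrivals are Poisson, PASTA lets us replace ``the work/delay a tagged arrival sees, conditioned on the phase'' by its time-stationary value conditioned on the phase.

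Next I would identify the relevant workload process with an EFS busy period. For phase~1, the $N_1^L$ large jobs present at the start of phase~1 accumulated during phases~2--4 and have i.i.d.\ $\exp(\mu_k)$ sizes independent of $N_1^L$, so their total work is distributed as $N_1^L\square S_k$; thereafter large arrivals come at rate $\lambda_k$ with sizes $S_k$ and the large-job workload $V^L$ evolves exactly as an $M/G/1$ workload, reaching $0$ precisely when phase~1 ends. This is the workload during the busy period of an $M/G/1$ whose initiating customer has exceptional size $N_1^L\square S_k$, i.e.\ the EFS queue with parameters $(\lambda_k,S_k,N_1^L\square S_k)$. For phase~2, I would run the same argument on the fictitious single-server queue whose population equals $n_1-k+1$ and in which each customer has size $S_1/k\sim\exp(k\mu_1)$: this queue has $M/M/1$ dynamics, begins each busy period (one per phase~2) with the batch of $N_2^S-k+1$ customers present at the start of phase~2, and the tagged small job's wait is precisely this queue's workload at the arrival instant; treating the initial batch as exceptional first work gives the EFS queue with parameters $(\lambda_1,\,S_1/k,\,(N_2^S-k+1)\square(S_1/k))$.

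To close the argument I would use the identity that $1-p^{EFS}$ equals the fraction of time the EFS server is busy, which follows from the expression for $p^{EFS}$ in \cref{remark:ex-first} together with the standard $M/G/1$ busy-period mean. Consequently the time-stationary workload conditioned on the server being busy equals $\E[W^{EFS}]/(1-p^{EFS})$. Combined with the two structural facts, $\E[T^L_1]$ is this conditional workload plus $\E[S_k]=1/\mu_k$, and $\E[T^S_2]$ is the analogous conditional workload of the fictitious queue plus $\E[S_1]=1/\mu_1$, matching the claimed expressions. Since $\E[W^{EFS}]$ and $p^{EFS}$ depend on their service arguments only through the first two moments, and the first two moments of $N_1^L\square S_k$ (resp.\ $(N_2^S-k+1)\square(S_1/k)$) are determined by those of $N_1^L$ and $S_k$ (resp.\ $N_2^S$ and $S_1$), the final formulas depend only on the first two moments of $S_1,S_k,N_1^L,N_2^S$.

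The main obstacle is that the MSFQ phase cycles are not a renewal process --- $N_1^L$ and $N_2^S$ are correlated across successive cycles --- so one cannot simply invoke renewal-reward to equate the time-average phase-1 workload with $\E\!\left[\int_{\text{phase }1}V^L\,dt\right]/\E[\phase_1]$, nor to transfer the EFS ``conditioned on busy'' identity to the MSFQ system. I would handle this with the extended renewal reward theorem (\cref{thm:extended-rr}), taking the embedded state $Y_n$ to be the system configuration at the start of phase~1 (resp.\ phase~2): conditioned on $Y_n$ the phase evolution and the accrued workload-time are i.i.d., so the relevant ratios of expectations are well defined and involve $N_1^L$ (resp.\ $N_2^S$) only through its marginal law --- which is exactly the law of the exceptional service in the matching EFS queue, so the EFS mean-value formulas apply verbatim. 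A minor point, covered by the approximation of \cref{sec:rt:approx}: assuming at least one large job at the start of phase~1 and at least $k$ small jobs at the start of phase~2 ensures these phases are non-degenerate and the exceptional service is strictly positive.
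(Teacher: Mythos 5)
Your proposal matches the paper's argument: both reduce the tagged job's waiting time to the time-stationary workload of an EFS queue conditioned on the server being busy (the paper phrases this as conditioning on no exceptional service, which by PASTA is the same event), then add a mean service time; the parameters $(\lambda_k, S_k, N_1^L\square S_k)$ and $(\lambda_1, S_1/k, (N_2^S-k+1)\square(S_1/k))$ are identical. Your writeup is somewhat more careful than the paper's on two points that the paper leaves implicit — you spell out the fictitious $M/M/1$ reduction behind the phase-2 case (the paper says only that it ``follows a similar argument''), and you explicitly flag that phase-1/phase-2 intervals do not form a renewal process and invoke \cref{thm:extended-rr} to justify transferring time-averages between the MSFQ system and the EFS queue — but the underlying decomposition and key lemma are the same, so this is the same proof with the gaps filled in rather than a genuinely different route.
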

\begin{proof}
    We compare a tagged heavy job that arrives into phase 1 with a job in the EFS system to compute its response time.
    Specifically, we compare the mean work in the system a tagged heavy job sees on arrival during phase 1 to the mean work a tagged job that does not receive exceptional first service sees.
    
    Consider the case where jobs arrive into an EFS system with rate $\lambda_k$, job sizes are i.i.d. exponentially distributed according to $S_k$ except for the jobs that receive exceptional first service, whose job sizes are sampled i.i.d from $\Sigma(N_1^H,  S_k)$.
    In our MSFQ system, the first job that arrives in phase 1 needs to wait for $N_1^H$ heavy jobs to complete.
    In the EFS system, the second job in a busy period also needs to wait for $N_1^H$ heavy jobs' work to complete.
    The subsequent jobs in the busy period also see the same work in the queue as the subsequent jobs in phase 1 in our MSFQ system.
    In other words, a tagged job that arrives into the MSFQ system during phase 1 sees the same mean work compared to a job that does not receive exceptional service in the EFS system we consider.
    Hence, we can use the results in Remark~\ref{remark:ex-first} as if the exceptional job size distribution is $S'\sim \Sigma(N_1^H,  Exp(\mu_k))$ and the non-exceptional distribution is $S\sim Exp(\mu_k)$.
    Then,
    \begin{align}
        \E[T_1^H] &= \E[\textrm{mean work a tagged job sees}] + \frac{1}{\mu_k}\nonumber\\
        &=\E[W^{EFS}(\lambda_k, S_k, \Sigma(N_1^H,  S_k) \mid \textrm{no exceptional service})] + \frac{1}{\mu_k}
        =\frac{\E[W^{EFS}(\lambda_k, S_k, \Sigma(N_1^H,  S_k))]}{1-p^{EFS}(\lambda_k, S_k, \Sigma(N_1^H,  S_k))} + \frac{1}{\mu_k}. \label{eq:et1l}
    \end{align}

    The proof for $\E[T_2^L]$ follows a similar argument.
    We compare a tagged light job that arrives during phase 2 and an EFS system with an arrival rate $\lambda_1$, exceptional job size distribution $S'\sim \Sigma(N_2^L-k+1, S_1/k)$, and non-exceptional distribution $S\sim S_1/k$. Then,
    \begin{align}
        \E[T_2^L]&=\frac{\E[W^{EFS}(\lambda_1, S_1/k, \Sigma(N_2^L-k+1, S_1/k))]}{1-p^{EFS}(\lambda_1, S_1/k, \Sigma(N_2^L-k+1, S_1/k))} + \frac{1}{\mu_1}.\label{eq:t2s}
    \end{align}
   Based on \Cref{remark:ex-first} and \eqref{eq:et1l}, we can see that our $\E[T_1^H]$ formula depends on the first and second moments of $S_k$ and $\Sigma(N_1^H,  S_k)$.
    It is easy to see that the first and second moments of $S_k$ is $\E[S_k]=1/\mu_k$ and $\E[S_k^2]=2/\mu_k^2$.
    We can compute $\E[\Sigma(N_1^H, S_k)]=\E[N_1^H]/\mu_k$ because $N_1^H$ is independent from the job sizes.
    Lastly, to compute $\E[\Sigma(N_1^H,  S_k)^2]$, we have
    \begin{align*}
        \E[\Sigma(N_1^H,  S_k)^2]&=\sum_{i=0}^\infty \E[(\Sigma(i,  S_k)^2)]P\set{N_1^H=i}
        =\sum_{i=0}^\infty \frac{i+i^2}{\mu_k^2} P\set{N_1^H=i}
        =\frac{\E[(N_1^H)^2]+\E[N_1^H]}{\mu_k^2}.
    \end{align*}
    Similarly, it suffices to compute first and second moments of $S_1/k$ and $\Sigma(N_2^L-k+1, S_1/k)$ to compute $\E[T_2^L]$.
    It is easy to see that $\E[S_1/k]=1/(k\mu_1)$ and $\E[(S_1/k)^2]=2/(k\mu_1)^2$. Likewise, we can compute $\E[\Sigma(N_2^L-k+1, S_1/k)]=(\E[N_2^L]-k+1)/(k\mu_1)$. Lastly, to compute $\E[\Sigma(N_2^L-k+1, S_1/k)^2]$, we have
    \begin{align*}
        \E[\Sigma(N_2^L-k+1, S_1/k)^2]
         &=\frac{\E[(N_2^L)^2]-(2k-3)\E[N_2^L]+k^2-3k+2}{k^2\mu_1^2}.
    \end{align*}
    Therefore, we have shown that it suffices to compute the first and second moments of $N_1^H$ and $N_2^L$ to compute $\E[T_1^H]$ and $\E[T_2^L]$.
\end{proof}



We now analyze the mean response time over all heavy jobs that arrive in any of phases 2, 3, or 4,
and the mean response time over all light jobs that arrive in either phases 1 or 4.
\begin{restatable}{lemma}{rexcess}
\label{thm:excess}
    The mean response time $\E[T_{2,3,4}^H]$
    of heavy jobs which arrive during phases 2, 3, or 4,
    and the mean response time $\E[T_{1,4}^L]$
    of light jobs which arrive during phases 1 or 4, are given by:
    \begin{align*}
        \E[T_{2,3,4}^H] = \frac{(\lambda_k/\mu_k + 1)\E[(\phase_2+\phase_{3}+\phase_4)^2]}{2\E[\phase_2+\phase_{3}+\phase_4]} + \frac{1}{\mu_k},
        \qquad 
        \E[T_{1,4}^L] = \frac{(\lambda_1/(k\mu_1) + 1) \E[(\phase_4+\phase_1)^2]}{2\E[\phase_4+\phase_1]} + \frac{1}{\mu_1}.
    \end{align*}
\end{restatable}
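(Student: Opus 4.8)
The plan is to view the large-job subsystem during phases 2--4 as an M/G/1 queue whose server is on vacation. During phases 2, 3, and 4 the MSFQ policy admits no large job into service and completes no large job, so a tagged large job arriving in any of these three phases is simply \emph{blocked} until phase 1 resumes. Two structural facts make the response time decompose cleanly. First, at the instant a phase 2 begins we have $n_k = 0$ (phase 1 ends exactly when the last large job departs), so no large job is present at the start of the blocked segment $V := \phase_2 + \phase_3 + \phase_4$; hence every large job that is ``ahead of'' the tagged job in FCFS order arrived during the \emph{elapsed} portion of that segment, and by the Poisson assumption their count is $M \sim \mathrm{Poisson}(\lambda_k a)$ given elapsed time $a$. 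Second, once phase 1 resumes, large jobs are served one at a time at rate $\mu_k$, and phase 1 cannot terminate while the tagged job is still present (it terminates at $n_k=0$). Therefore the response time of the tagged large job equals the residual of the blocked segment it arrived in, plus the sum of the $M$ i.i.d.\ $\mathrm{Exp}(\mu_k)$ service times of the jobs ahead of it, plus its own $\mathrm{Exp}(\mu_k)$ service time. The assumption in \cref{sec:rt:approx} that there is at least one large job at the start of phase 1 guarantees this blocked segment is genuinely in force (no phase is skipped), so the decomposition is the whole story.

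The second step is to evaluate $\E[\text{residual}]$ and $\E[\text{elapsed}]$ of the blocked segment as seen by a tagged arrival. Because the phase cycles do not form a renewal process, I would invoke \cref{thm:extended-rr} with the distinguished events taken to be the starts of phase 2 (equivalently, the ends of phase 1) and the reward taken to be the time-integral of the residual (resp.\ elapsed) blocked time: over one segment of length $v$ one has $\int_0^v (v-a)\,da = \int_0^v a\,da = v^2/2$, so the long-run time average of both the residual and the elapsed blocked time is $\E[V^2]/(2\E[V])$ (length-biasing is automatic in the time average). A PASTA-type argument then transfers this time average to the average seen by a tagged large arrival; the transfer is legitimate here because, \emph{restricted to phases 2--4}, the phase durations are governed entirely by the small-job dynamics and are independent of the large-arrival process. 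Combining with $\E[M] = \lambda_k \cdot \E[\text{elapsed}] = \lambda_k \E[V^2]/(2\E[V])$ and taking expectations gives
\[
\E[T^L_{2,3,4}] \;=\; \frac{\E[V^2]}{2\E[V]} \;+\; \frac{\lambda_k}{\mu_k}\cdot\frac{\E[V^2]}{2\E[V]} \;+\; \frac{1}{\mu_k} \;=\; \frac{(\lambda_k/\mu_k + 1)\,\E[(\phase_2+\phase_3+\phase_4)^2]}{2\,\E[\phase_2+\phase_3+\phase_4]} + \frac{1}{\mu_k}.
\]

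For small jobs the argument is symmetric: a small job arriving in phase 1 or phase 4 is blocked from service (in phase 4 new small jobs are held out of service, in phase 1 only large jobs run) until phase 2 begins, so the relevant blocked segment is $V' := \phase_4 + \phase_1$, which is contiguous in the cycle. The small-job queue is empty when phase 4 starts (the jobs in service at that point drain during phase 4), so again all small jobs ahead of the tagged one arrived during the elapsed part of $V'$, in number $M' \sim \mathrm{Poisson}(\lambda_1 a)$. The one difference is that phase 2 runs up to $k$ small jobs in parallel, so the queue is drained at aggregate rate $k\mu_1$; treating the $k$ servers as a single rate-$k\mu_1$ server gives that the $M'$ jobs ahead contribute $\E[M']/(k\mu_1)$ with $\E[M'] = \lambda_1 \E[V'^2]/(2\E[V'])$, the residual contributes $\E[V'^2]/(2\E[V'])$, and the tagged job's own full service contributes $1/\mu_1$, yielding the stated formula for $\E[T^S_{1,4}]$ (the $\ge k$ small-jobs-at-phase-2 assumption of \cref{sec:rt:approx} ensures all $k$ servers are busy at phase-2 start, so this rate is the correct one).

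The main obstacle is the second step: the non-renewal structure of the phase cycles, which forces the use of \cref{thm:extended-rr} in place of ordinary renewal reward, together with the PASTA-type transfer that lets a class-$k$ arrival see the time-stationary residual of a segment whose length is determined by class-$1$ behavior (and vice versa). A secondary subtlety, specific to the small-job formula, is that under $k$-fold parallelism the jobs ahead need not all \emph{complete} before the tagged job enters service --- only enough to free a server --- so I expect the rate-$k\mu_1$ accounting above to be exact only up to an $O(1/\mu_1)$ boundary term at the start of phase 2, consistent with the heavy-traffic simplifications already adopted in \cref{sec:rt:approx}.
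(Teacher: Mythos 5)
Your proposal takes essentially the same approach as the paper: decompose the response time of a tagged job into the residual (excess) of the blocked segment plus the work contributed by jobs that arrived during the elapsed (age) portion plus the tagged job's own service, evaluate the age/excess via \cref{thm:extended-rr} since the cycles are not a renewal process, and transfer the time average to the tagged-arrival average by PASTA. The paper's proof is terser (it just writes $\E[T^L_{2,3,4}] = \E[H_e^L] + (\lambda_k/\mu_k)\E[H_a^L] + \E[S_k]$ and cites ``standard results on ages and excess''), so your more explicit account of why $n_k = 0$ at the start of phase 2, and why the no-skipped-phases assumption is needed, is a genuine improvement in rigor. Your flagging of the small-job subtlety — that under $k$-fold parallelism a tagged job only needs a server to free up, so the $k\mu_1$ aggregate-rate accounting is exact only modulo an $O(1/\mu_1)$ boundary correction — is also a correct observation that the paper leaves implicit in its ``Approximations'' framing. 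One small caveat: your justification for PASTA via ``the phase durations are independent of the large-arrival process'' is not quite right as stated ($H_1$ depends on earlier large arrivals, and $H_1$ feeds into $N_2^S$ and hence $H_2$); what actually licenses the transfer is that the Poisson arrival times of the tagged class are independent of the past, so a tagged arrival still sees the time-stationary excess even though the segment lengths are history-dependent. The conclusion is unaffected.
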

\begin{proof}
    Consider a tagged heavy job that arrives into the system in any of phases 2, 3, or 4.
    On arrival, this tagged job sees all the heavy job arrivals between the start of the most recent phase 2 and the current time in the system.
    Hence, the tagged job needs to wait until these jobs are completed in the upcoming phase 1
    before it can begin receiving service.
    Let $\phase_a^H$ be the \emph{age} of the $\phase_2+\phase_{3}+\phase_4$ period, the elapsed time from the beginning of phase 2 an arrival during this period sees, and $\phase_e^H$ be the \emph{excess} of the $\phase_2+\phase_{3}+\phase_4$ period, the time until the end of phase 4 an arrival during this period sees.
    On average, the tagged heavy job sees $\lambda_k \E[\phase_a^H]$ heavy jobs in the system at the time it arrives into the system.
    Hence, the mean work the heavy job sees is $\frac{\lambda_k}{\mu_k}\E[\phase_a^H]$.

    The response time of this tagged heavy job is composed of 3 elements: the time before any heavy job receives service, the mean work the tagged job sees in the system, and the time to serve itself. Hence,
    \begin{align}
       \E[T^H_{2,3,4}] &= \E[\phase_e^H]+\frac{\lambda_k}{\mu_k} \E[\phase_a^H]+\E[S_k]\label{eq:tl234}.
    \end{align}

    Combining the Palm inversion \cite{leboudec2010performance}, standard results on ages and excess~\cite{harchol2013performance}, and \eqref{eq:tl234}, we have:
    \begin{align*}
        \E[T^H_{2,3,4}] &= \frac{\E[(\phase_2+\phase_3+\phase_4)^2]}{2\E[\phase_2+\phase_3+\phase_4]}+\frac{\lambda_k}{\mu_k}\frac{\E[(\phase_2+\phase_{3}+\phase_4)^2]}{2\E[\phase_2+\phase_{3}+\phase_4]}+\frac{1}{\mu_k}\\
        &=\frac{(\lambda_k/\mu_k + 1)\E[(\phase_2+\phase_{3}+\phase_4)^2]}{2\E[\phase_2+\phase_{3}+\phase_4]} + \frac{1}{\mu_k}.
    \end{align*}

    Similarly, consider a tagged job that arrives into the system in either phase 4 or phase 1.
    On arrival, this tagged job sees all of the light job arrivals between the start of the most recent phase 4 and the current moment.
    let $H_a^L$ and $H_e^L$ be the age and excess of this $H_4+H_1$ period.
    The tagged light job that arrives during phase 4 or phase 1 also needs to wait for an excess and $\frac{\lambda_1}{k\mu_1}\E[H_a^L]$ of work before receiving service.
    Its response time is composed of the excess duration, the mean work it sees on arrival, and the time it takes to complete:
    \begin{align*}
        \E[T^L_{1,4}] &= \E[\phase_e^L]+\frac{\lambda_1}{k\mu_1} \E[\phase_a^L]+\E[S_1]
        =\frac{\E[(\phase_4+\phase_1)^2]}{2\E[\phase_4+\phase_1]} + \frac{\lambda_1}{k\mu_1}\frac{\E[(\phase_4+\phase_1)^2]}{2\E[\phase_4+\phase_1]} + \frac{1}{\mu_1} \\
        &=\frac{(\lambda_1/(k\mu_1) + 1) \E[(\phase_4+\phase_1)^2]}{2\E[\phase_4+\phase_1]} + \frac{1}{\mu_1}.
        \qedhere
    \end{align*}


\end{proof}
Therefore, we have shown that it suffices to compute the first and second moments of $H_i$ in order to compute $\E[T_{2,3,4}^H]$ and $\E[T_{4,1}^L]$.

We finish by analyzing the mean response $\E[T_3^L]$ of light jobs that arrive during phase 3.
\begin{restatable}{lemma}{r3s}
\label{thm:absorption}
The mean response time $\E[T_3^L]$ of light jobs that arrive during phase 3 is given by:
\begin{align*}
    \E[T_3^L]=\frac{\sum_{j=\ell+1}^\infty \frac{C_j}{\lambda_1+\min(k, j)\mu_1} \frac{k+(j-k+1)^+}{k\mu_1}}{\sum_{j=\ell+1}^\infty \frac{C_j}{\lambda_1+\min(k, j)\mu_1}},
\end{align*}
where $C_j$ is defined recursively as follows, for each positive integer $j$:
\begin{align*}
    C_j &=\begin{cases}
        \frac{\lambda_1+(\ell+1)\mu_1}{(\ell+1)\mu_1}\mathbbm{1}\set{\ell+1\le k-1} & j = \ell+1\\
        C_{j-1}\frac{\lambda_1(\lambda_1+j\mu_1)}{j\mu_1(\lambda_1+(j-1)\mu_1)}+\frac{\lambda_1+j\mu_1}{j\mu_1}\mathbbm{1}\set{j\le k-1} & \ell+1<j\le k\\
        \frac{\lambda_1}{k\mu_1} C_k & j>k
    \end{cases}.
\end{align*}
\end{restatable}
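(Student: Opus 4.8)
Here is my plan for proving the final lemma (the formula for $\E[T_3^S]$).

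The plan is to identify phase~3 with a birth--death excursion and then average the per-arrival response time over that excursion via PASTA. Under the approximation of \cref{sec:rt:approx}, phase~2 ends the instant only $k-1$ small jobs remain, so every phase~3 begins with $n_1=k-1$, and throughout phase~3 the count $n_1(t)$ is a birth--death CTMC on $\{\ell,\ell+1,\ell+2,\dots\}$ with up-rate $\lambda_1$ (small arrivals) and down-rate $\min(k,n_1)\mu_1$ (exactly $\min(k,n_1)$ small jobs are in service, each completing at rate $\mu_1$); the state $\ell$ is absorbing, and its hitting time is the end of phase~3.

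First I would compute $v(j)$, the conditional mean response time of a small job that arrives during phase~3 and finds $n_1=j$ just before its arrival. If $j\le k-1$ there is an idle server, so the job enters service immediately and, being non-preemptible, departs after a service $S_1$, giving $v(j)=1/\mu_1$. If $j\ge k$ all $k$ servers are busy and the tagged job queues behind the $j-k$ small jobs already waiting; it must wait out $j-k+1$ completions, and while it waits there are at least $k+1$ small jobs present, so the phase cannot end and all $k$ servers stay busy --- hence those completions form a rate-$k\mu_1$ Poisson stream, the expected wait is $(j-k+1)/(k\mu_1)$, and adding its own mean service $1/\mu_1$ gives $v(j)=(j+1)/(k\mu_1)$. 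The two cases combine to $v(j)=\bigl(k+(j-k+1)^+\bigr)/(k\mu_1)$, and in both $v(j)$ depends only on $j$.

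Next I would pass to a time average. Since small jobs arrive as a Poisson process, conditioned on arriving during phase~3 a tagged job sees $n_1$ in its time-stationary law over phase~3 (PASTA); combined with the Extended Renewal Reward Theorem (\cref{thm:extended-rr}, needed because successive phase cycles are correlated rather than i.i.d.) this gives $\E[T_3^S]=\E\bigl[\int_{\phase_3} v(n_1(t))\,dt\bigr]/\E[\phase_3]$. Let $C_j$ be the expected number of visits the embedded jump chain of the phase-3 process makes to state $j$ before absorption at $\ell$, started from $k-1$; since each visit to $j$ has expected length $1/(\lambda_1+\min(k,j)\mu_1)$, the expected time in state $j$ during a phase~3 equals $C_j/(\lambda_1+\min(k,j)\mu_1)$, whence
\begin{align*}
\E[T_3^S]=\frac{\sum_{j\ge\ell+1}\frac{C_j}{\lambda_1+\min(k,j)\mu_1}\,v(j)}{\sum_{j\ge\ell+1}\frac{C_j}{\lambda_1+\min(k,j)\mu_1}},
\end{align*}
which is the claimed identity once $v(j)$ is substituted.

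Finally I would justify the recursion for $C_j$ by flow balance on the jump chain. For any birth--death path from $k-1$ to $\ell$, the number of $j\to j+1$ steps minus the number of $j+1\to j$ steps is the deterministic net crossing number of the cut separating $\{\le j\}$ from $\{\ge j+1\}$: it is $0$ when $j\ge k-1$ (both endpoints lie on the low side) and $1$ when $\ell\le j\le k-2$ (the path starts above the cut and ends below it). Writing those expected counts as $C_j\cdot\frac{\lambda_1}{\lambda_1+\min(k,j)\mu_1}$ and $C_{j+1}\cdot\frac{\min(k,j+1)\mu_1}{\lambda_1+\min(k,j+1)\mu_1}$ and solving edge by edge yields the base case $C_{\ell+1}$ (there is one downward step into the absorbing state $\ell$ and none upward out of it, so $C_{\ell+1}=\frac{\lambda_1+(\ell+1)\mu_1}{(\ell+1)\mu_1}$ when $\ell+1\le k-1$, and $C_{\ell+1}=0$ otherwise), the two-term recursion for $\ell+1<j\le k$ whose additive term survives precisely while $j\le k-1$, and for $j>k$ the fact that the up/down probabilities no longer depend on $j$, which forces the geometric last case. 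I expect this crossing/flow-balance bookkeeping --- especially the care needed where the death rate changes regime near $j=k$ and at the absorbing boundary $j=\ell$ --- to be the main obstacle; the remainder is substitution.
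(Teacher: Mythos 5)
Your proposal is correct and follows essentially the same route as the paper: model $n_1(t)$ during phase~3 as an absorbing birth--death excursion from $k-1$ down to $\ell$ (with M/M/$k$ transition rates), derive the per-state expected visit counts $C_j$ via the crossing/flow-balance identity, convert to expected sojourn times by dividing by the total outflow rate $\lambda_1+\min(k,j)\mu_1$, and then combine PASTA with a renewal-type argument to average the conditional response time $v(j)=\bigl(k+(j-k+1)^+\bigr)/(k\mu_1)$ over the time-stationary law. Your derivation of $v(j)$ --- immediate service for $j\le k-1$, and wait $(j-k+1)/(k\mu_1)$ plus service for $j\ge k$, using the fact that while the tagged job waits $n_1>k>\ell$ so all $k$ servers stay busy and phase~3 cannot end --- is exactly the right justification, and invoking \cref{thm:extended-rr} to handle correlated cycles is appropriate (the paper states PASTA and ``pre-absorption average'' more tersely). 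One cosmetic slip: you state the net crossing number, defined as $(\#\{j\to j{+}1\})-(\#\{j{+}1\to j\})$, equals $1$ for $\ell\le j\le k-2$; since the path starts above the $j/j{+}1$ cut and ends below it, there is one extra downcrossing, so this difference is $-1$. Your edge-by-edge bookkeeping (``one downward step into $\ell$ and none upward out of it'') uses the correct orientation, so the sign error does not propagate into the $C_j$ recursion.
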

\begin{proof}    
    Consider the stochastic process $\set{n_1(t)}$ during phase 3. By the definition of MSFQ, $n_1=k-1$ at the beginning of phase 3 and $n_1=\ell$ at the end of phase 3.
    The process $\set{n_1(t)}$ forms an absorbing Markov chain corresponding to an $M/M/k$ system with arrival rate $\lambda_1$, and job size distribution $S_1$.
    In order to characterize the mean response time of a light job that arrives in phase 3, we condition the arrival based on the state it sees in $\set{n_1(t)}$.
    Therefore, it suffices to compute $C_j$, the number of visits to state $j$ in $\set{n_1(t)}$, to characterize the probability a light job arrives in state $j$.
    
    We first consider the case where there are at least $k+1$ light jobs in the system.
    When $j\ge k+1$, an arrival from state $j-1$ or $j$ will accrue one visit to state $j$, 
    \begin{align*}
        C_j&=C_{j-1}\frac{\lambda_1}{\lambda_1+k\mu_1}+C_j\frac{\lambda_1}{\lambda_1+k\mu_1}=C_{j-1}\frac{\lambda_1}{k\mu_1}.
    \end{align*}
    Therefore, for all $j\ge k+1$, we have $C_j=\frac{\lambda_1}{k\mu_1}C_k$.
    When $\ell+2\le j\le k$, an arrival from state $j-1$ or $j$ will accrue one visit to state $j$ with different rates:
    \begin{align*}
        C_j&=C_{j-1}\frac{\lambda_1}{\lambda_1+(j-1)\mu_1}+C_j\frac{\lambda_1}{\lambda_1+j\mu_1}+\mathbbm{1}\set{j\le k-1}=C_{j-1}\frac{\lambda_1(\lambda_1+j\mu_1)}{j\mu_1(\lambda_1+(j-1)\mu_1)}+\frac{\lambda_1+j\mu_1}{j\mu_1}\mathbbm{1}\set{j\le k-1}.
    \end{align*}
    Finally, we handle $C_{\ell+1}$ separately:
    \begin{align*}
        C_{\ell+1}&=C_{\ell+1}\frac{\lambda_1}{\lambda_1+(\ell+1)\mu_1}+\mathbbm{1}\set{\ell+1\le k-1}=\frac{\lambda_1+(\ell+1)\mu_1}{(\ell+1)\mu_1}\mathbbm{1}\set{\ell+1\le k-1}.
    \end{align*}
    To compute $\E[T_3^L]$, note that the response time of a light job that arrives during phase 3 depends only on the number of light jobs $n_1(t)$ when the light job arrives.
    A light job that sees $j$ jobs on arrival has expected response time $\frac{k+(j-k+1)^+}{k\mu_1}$.
    As a result,
    we can compute $E[T_3^L]$ by conditioning on $n_1(t)$ seen on arrival. By PASTA, this is simply the time-average distribution of the number of jobs seen during phase 3, which is given by the pre-absorption average of $C_i$. 
    Specifically,
    \begin{align*}
        \E[T_3^L]&=\frac{\sum_{j=\ell+1}^\infty\E[\textrm{total time spent in state $j$}] \cdot  \E[T_3^L\mid \textrm{the job arrives during state $j$}]}{\sum_{j=\ell+1}^\infty\E[\textrm{total time spent in state $j$]}}\\
        &=\frac{\sum_{j=\ell+1}^\infty \frac{C_j}{\lambda_1+\min(k, j)\mu_1} \frac{k+(j-k+1)^+}{k\mu_1}}{\sum_{j=\ell+1}^\infty \frac{C_j}{\lambda_1+\min(k, j)\mu_1}}.
        \qedhere
    \end{align*}
\end{proof}
We have shown that to compute the $\E[T_i]$ terms and the $m_i$ terms
in \eqref{eq:er} for each phase $i$,
it suffices to compute the first and second moments of $\phase_i$ and $N_i$.
To compute the first and second moments,
it suffices in turn to compute the transforms of $\phase_i$ and $N_i$ for all $i$.
We tackle these transforms in the following section.
\subsection{Phase Duration Analysis}
\label{sec:cycle-analysis}
In this section, we compute the required transforms of $\phase_i$ and $N_i$ for all phases $i$ in order to compute their first and second moments.
Taken together with Lemmas \ref{lem:fraction}-\ref{thm:absorption}, this completes the proof of \cref{thm:summary}.

In our analysis, it will be useful to refer to \emph{busy periods} of the system when serving either light or heavy jobs.  We define these busy periods as follows.

\begin{restatable}[Busy Periods]{remark}{busyperiod}
    We consider a busy period started by a random amount of work $W$ to be the time required for an M/G/1 system to empty when starting with $W$ work in the system.
    We let $B^H_W$ be the duration of this busy period in an M/G/1 where only heavy jobs arrive, with arrival rate $\lambda_k$.
    Similarly, let $B^L_W$ be the duration of this busy period in an M/G/1 where only light jobs arrive, with arrival rate $\lambda_1$.
    Using standard queueing-theoretic techniques~\cite{harchol2013performance}, we have
    \begin{align*}
        \wt{B^H_W}(s)&=\wt{W}(s+\lambda_k-\lambda_k \wt{B^H_{S_k}}(s)),
        \qquad 
    \wt{B^L_W}(s)=\wt{W}(s+\lambda_1-\lambda_1 \wt{B^L_{S_1}}(s)).
    \end{align*}
\end{restatable}

To begin, we observe that the duration of phase 1 is equal to a busy period started by the number of heavy jobs that arrive during the preceding phases 2-4.
Similarly, the duration of phase 2 is a busy period started by the number of light jobs that arrive during the preceding phases 4 and 1.
This insight allows us to analyze $\wt{\phase_1}(s)$ and $\wt{\phase_2}(s)$ in Lemma~\ref{lemma:cycle-length}.
\begin{restatable}{lemma}{cyclelength}
    \label{lemma:cycle-length}
     The transforms of the distributions of phase 1 and phase 2 durations are given by:
    \begin{align*}
        \wt{\phase_1}(s)&=\wh{N_1^H}(\wt{B^H_{S_k}}(s)),
        \qquad
        \wt{\phase_2}(s)=\wh{N_2^L}(\wt{B^L_{S_1}}(s))(\wt{B^L_{S_1}}(s))^{1-k}.
    \end{align*}
\end{restatable}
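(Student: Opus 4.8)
The plan is to derive each transform by identifying the phase duration with a busy period of a single-class M/G/1 system, started by a random initial amount of work, and then invoking the busy-period transform identities from the Busy Periods remark together with the definition $\wh{X}(z)$ of the z-transform of an integer-valued random variable.

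For $\phase_1$: during phase 1, only large jobs are served, one at a time, so the sub-system restricted to large jobs behaves exactly as an M/G/1 queue with arrival rate $\lambda_k$ and job size $S_k$. Phase 1 begins with $N_1^L$ large jobs in the system (each contributing an i.i.d.\ amount of work $S_k$) and ends precisely when this M/G/1 queue first empties. Hence $\phase_1 = B^L_{W}$ where $W = N_1^L \square S_k$, i.e.\ $\phase_1$ is a busy period started by $N_1^L$ independent fresh large jobs. Using the standard decomposition that a busy period started by $n$ jobs is the independent sum of $n$ ordinary busy periods each started by one job, we get $\wt{\phase_1}(s) = \E\big[(\wt{B^L_{S_k}}(s))^{N_1^L}\big] = \wh{N_1^L}(\wt{B^L_{S_k}}(s))$, using that $N_1^L$ is independent of the service durations of jobs arriving afterward (which follows from the memorylessness/Poisson structure).

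For $\phase_2$: during phase 2, up to $k$ small jobs run in parallel, fully utilizing the servers, which is equivalent to a single M/G/1 working at rate $k\mu_1$ — equivalently, an M/G/1 with arrival rate $\lambda_1$ and scaled service $S_1/k$. Phase 2 starts with $N_2^S$ small jobs and ends when $n_1$ first drops below $k$, i.e.\ when the number of small jobs hits $k-1$, rather than $0$. So $\phase_2$ is the time for this M/G/1-type system to go from $N_2^S$ jobs down to $k-1$ jobs. The cleanest way to handle the nonzero stopping level is to note that reaching level $k-1$ from level $N_2^S$ has the same law as an ordinary busy period started by $N_2^S - (k-1)$ fresh jobs (the last $k-1$ jobs present act as "permanent" work that is simply not counted), but each "unit of service" here is $S_1/k$; concretely $\phase_2 = B^S_{(N_2^S - k + 1)\square(S_1/k)}$ in an M/G/1 with rate $\lambda_1$. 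This yields $\wt{\phase_2}(s) = \wh{N_2^S - k + 1}(\wt{B^S_{S_1/k}}(s))$. Rewriting $\wh{N_2^S - k+1}(z) = z^{-(k-1)}\wh{N_2^S}(z)$ and matching notation (the paper's $\wt{B^S_{S_1}}$ is the busy-period transform in the appropriately-rated small-job system) gives $\wt{\phase_2}(s) = \wh{N_2^S}(\wt{B^S_{S_1}}(s))\,(\wt{B^S_{S_1}}(s))^{1-k}$, as claimed.

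The main obstacle I expect is being careful about the phase-2 stopping condition: phase 2 terminates when $n_1 < k$, and while the system has at least $k$ small jobs it runs exactly like a saturated M/G/1; I must justify that the "last $k-1$" small jobs can be treated as non-completing placeholders so that the hitting-time-of-level-$(k-1)$ reduces exactly to an ordinary busy period started by $N_2^S - k + 1$ jobs — and confirm the approximation from Section~\ref{sec:rt:approx} ($N_2^S \ge k$) makes this well-defined. A secondary technical point is the independence needed to factor $\E[(\wt{B^L_{S_k}}(s))^{N_1^L}]$ into $\wh{N_1^L}(\wt{B^L_{S_k}}(s))$: the initial count $N_1^L$ is determined by events in phases 2--4, whereas the busy-period durations depend on the i.i.d.\ sizes of those $N_1^L$ jobs plus the fresh Poisson large-job arrivals during phase 1, all of which are independent of $N_1^L$ given its value, so the conditioning argument goes through cleanly.
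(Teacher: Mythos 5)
Your proposal is correct and follows essentially the same route as the paper: identify $\phase_1$ with a large-job busy period started by $N_1^L$ i.i.d.\ fresh jobs, and $\phase_2$ (under the standing approximation $N_2^S \ge k$) with a small-job busy period started by $N_2^S - k + 1$ jobs, then compose transforms. Your explicit note that the paper's $\wt{B^S_{S_1}}$ implicitly refers to the busy period with effective service time $S_1/k$ (since all $k$ servers are busy during phase 2) and your justification of the independence needed to factor $\E\bigl[(\wt{B^L_{S_k}}(s))^{N_1^L}\bigr]$ both correctly fill in details the paper leaves terse.
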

\begin{proof}

At the beginning of phase 1, the system has $\Sigma(N_1^H,  S_k)$ amount of work in terms of heavy jobs.
At the end of phase 1, the system empties all the heavy jobs in the system.
Therefore, the length of phase 1 can be seen as a busy period for heavy jobs started by $\Sigma(N_1^H,  S_k)$ amount of work.
\begin{align*}
    \wt{\phase_1}(s)=\wt{B^H_{\Sigma(N_1^H,  S_k)}}(s)
    =\wh{N^H_1}(\wt{S_k}(s+\lambda_k-\lambda_k \wt{B^H_{S_k}}(s)))
    =\wh{N_1^H}(\wt{B^H_{S_k}}(s))\,.
\end{align*}
Here, we use the fact that $\wt{\Sigma(X, Y)}(s) = \wh{X}(\wt{Y}(s))$ where $X$ is either independent of $Y$ or is a stopping time relative to the sequence of durations $Y$ \cite{harchol2013performance}.

Similarly, the system has $N_2^L$ light jobs at the beginning of phase 2 and will have $k-1$ light jobs at the end of phase 2.
In this case, the system finishes $N_2^L-k+1$ jobs over the course of phase 2.
Therefore, the length of phase 2 can be seen as a busy period for light jobs started by $\Sigma(N_2^L-k+1, S_k)$ amount of work.
\begin{align*}
    \wt{\phase_2}(s)&=\wt{B^L_{\Sigma(N_2^L-k+1, S_k)}}(s)
    =\wh{N_2^L-k+1}(\wt{B^L_{S_1}}(s))
    =\wh{N_2^L}(\wt{B^L_{S_1}}(s))(\wt{B^L_{S_1}}(s))^{1-k}\,.
    \qedhere
\end{align*}
\end{proof}

Next, we compute the z-transforms of the number of jobs in the system at the start of each phase.  We note that our response time analysis depends only on the moments of $N_1^H$ and $N_2^L$, hence it suffices to compute z-transforms $\wh{N_1^H}(z)$ and $\wh{N_2^L}(z)$.
We show how these transforms depend on the Laplace transforms of the phase durations in Lemma~\ref{lem:ztrans}.

\begin{restatable}{lemma}{numberphases}
\label{lem:ztrans}
    The z-transforms of the distributions of the number of heavy jobs at the beginning of phase 1 and the number of light jobs at the beginning of phase 2 are given by:
    \begin{align*}
        \wh{N_1^H}(z)&=\wt{\phase_2}(\lambda_k(1-z))\wt{\phase_{3}}(\lambda_k(1-z))\wt{\phase_{4}}(\lambda_k(1-z))\\
        \wh{N_2^L}(z)&=\wt{\phase_2}(\lambda_k(1-\beta(z)))\wt{\phase_{3}}(\lambda_k(1-\beta(z)))\wt{\phase_{4}}(\lambda_k(1-\beta(z))+\lambda_1(1-z)),\\
        \text{where } \beta(z)&= \wt{B^H_{S_k}}(\lambda_1(1-z)).
    \end{align*}
\end{restatable}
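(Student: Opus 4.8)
The plan is to write each of $N_1^L$ and $N_2^S$ as a Poisson-distributed count of arrivals accumulated over a block of consecutive phases, and then pass to transforms. Under the no-skip approximation of \cref{sec:rt:approx} the four phases cycle in the fixed order $1\to 2\to 3\to 4\to 1$; large jobs are served only in phase 1, small jobs only in phases 2--4, and phase 4 serves only the (at most $\ell$) small jobs already in service when it begins. Hence $N_1^L$, the number of large jobs present at the start of a phase 1, is exactly the number of class-$k$ arrivals during the three preceding phases $2,3,4$; and $N_2^S$, the number of small jobs present at the start of a phase 2, is exactly the number of class-$1$ arrivals during the preceding phase 4 together with the following phase 1 (no small job is served in either of those). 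I will use throughout that the class-$1$ and class-$k$ arrival streams are independent Poisson processes, so a count of one class over a union of consecutive intervals is, conditionally on the interval lengths, Poisson with the obvious mean, together with the elementary identity $\E[z^{\mathrm{Poisson}(\mu)}]=e^{-\mu(1-z)}$.

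A second ingredient is the mutual independence of $\phase_2,\phase_3,\phase_4$ within a cycle, which I would deduce from the structure of MSFQ: under the no-skip approximation phase 3 always begins in the deterministic state $n_1=k-1$ and phase 4 always begins with a deterministic number of in-service small jobs, so by memorylessness of exponential service the conditional law of $\phase_3$ given the history is a fixed distribution, and likewise for $\phase_4$ given the history including $\phase_3$. Consequently $\E[e^{-a\phase_2-b\phase_3-c\phase_4}]$ factors as $\wt{\phase_2}(a)\wt{\phase_3}(b)\wt{\phase_4}(c)$.

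Given these two ingredients, the transform of $N_1^L$ is immediate: conditioning on $\phase_2,\phase_3,\phase_4$ gives $\E[z^{N_1^L}\mid \phase_2,\phase_3,\phase_4]=e^{-\lambda_k(1-z)(\phase_2+\phase_3+\phase_4)}$, and factoring over the three independent phases yields the stated formula for $\wh{N_1^L}(z)$. For $N_2^S$ the one genuine complication is that $\phase_1$ is itself a functional of $N_1^L$ (through $\phase_1=B^L_{N_1^L\square S_k}$), so $\phase_1$ and $\phase_4$ are \emph{not} independent and their transforms cannot simply be multiplied; I would therefore condition in stages. First, conditionally on the lengths $\phase_4$ and $\phase_1$, $N_2^S$ is $\mathrm{Poisson}(\lambda_1(\phase_4+\phase_1))$---the small stream splits independently across these consecutive intervals, whose lengths do not themselves depend on small arrivals---so $\wh{N_2^S}(z)=\E[e^{-\lambda_1(1-z)\phase_4}e^{-\lambda_1(1-z)\phase_1}]$. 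Second, conditioning on the history up to the start of phase 1 (which fixes $N_1^L$) and using that phase 1 is a class-$k$ busy period started by $N_1^L$ jobs, hence a sum of that many i.i.d.\ copies of $B^L_{S_k}$ (the busy-period identities above and \cref{lemma:cycle-length}), gives $\E[e^{-\lambda_1(1-z)\phase_1}\mid\text{history}]=\beta(z)^{N_1^L}$ with $\beta(z)=\wt{B^L_{S_k}}(\lambda_1(1-z))$; since $\phase_4$ belongs to that history, $\wh{N_2^S}(z)=\E[e^{-\lambda_1(1-z)\phase_4}\beta(z)^{N_1^L}]$. Third, conditioning on $\phase_2,\phase_3,\phase_4$ turns $N_1^L$ into a $\mathrm{Poisson}(\lambda_k(\phase_2+\phase_3+\phase_4))$ variable, so $\E[\beta(z)^{N_1^L}\mid\phase_2,\phase_3,\phase_4]=e^{-\lambda_k(1-\beta(z))(\phase_2+\phase_3+\phase_4)}$; multiplying by the residual factor $e^{-\lambda_1(1-z)\phase_4}$ and invoking the factorization over $\phase_2,\phase_3,\phase_4$ gives exactly the stated product, with the extra $\lambda_1(1-z)$ appearing only in the $\phase_4$ argument.

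The main obstacle I anticipate is not the transform algebra but the probabilistic bookkeeping: keeping straight which cycle each phase belongs to, justifying that the two Poisson streams split independently across the relevant consecutive phase intervals (in particular that the intervals forming $N_2^S$ have lengths unaffected by small arrivals), and---most importantly---establishing the mutual independence of $\phase_2,\phase_3,\phase_4$, i.e.\ that phases 3 and 4 start from deterministic states and run on fresh randomness, since that is precisely what lets the joint expectations collapse into products of Laplace--Stieltjes transforms. Once those structural facts are in hand, everything reduces to $\E[z^{\mathrm{Poisson}(\mu)}]=e^{-\mu(1-z)}$ and $\wt{B^L_{m\square S_k}}(s)=\wt{B^L_{S_k}}(s)^m$.
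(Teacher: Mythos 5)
Your proposal is correct and follows essentially the same route as the paper: both write $N_1^L$ and $N_2^S$ as Poisson counts over the appropriate block of consecutive phases, both use $\E[e^{-s\phase_1}\mid N_1^L]=\wt{B^L_{S_k}}(s)^{N_1^L}$ to handle the dependence of $\phase_1$ on $\phase_4$ via $N_1^L$, and both finish by factoring over the mutually independent $\phase_2,\phase_3,\phase_4$. The only difference is presentational (you use a two-stage iterated conditioning where the paper writes a single integral over $\phase_4=x$), and your explicit justification of the independence of $\phase_2,\phase_3,\phase_4$ is a point the paper leaves implicit.
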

\begin{proof}
The number of heavy jobs at the beginning of phase 1 can be seen as the number of arrivals accrued during a $\phase_2+\phase_{3}+\phase_4$ time period because the heavy jobs are emptied at the end of phase 1.
Formally, we have $N^H_1\sim A^H_{\phase_2+\phase_{3}+\phase_4}$, where $A^H_X$ denotes the number of heavy job arrivals in $X$ seconds.

Similarly, the number of light jobs at the beginning of phase 2 can be seen as the number of arrivals accrued during phase 4 and phase 1.
However, note that phases 4 and 1 are not independent, as they are positively correlated.
Intuitively, a longer phase 4 will result in a longer phase 1 because of more heavy jobs arriving into the system.
In this case, we use $\phase_{4,1}$ to denote the length of the joint phase 4 and phase 1 period.
As a result, we have $N^L_2\sim A^L_{\phase_{4,1}}$.
Note that, if we are only interested in the mean of this joint period, $\E[\phase_{4,1}]=\E[\phase_4]+\E[\phase_1]$ due to the linearity of expectations.

Next, we use the standard transform formulas for Poisson arrivals during a random interval: $\wh{A^H_X}(z)=\wt{X}(\lambda_k (1-z)), \wh{A^L_X}(z)=\wt{X}(\lambda_1 (1-z))$ \cite{harchol2013performance}.
Plugging into $N_1^H$, we have
\begin{align*}
    \wh{N_1^H}(z)&=\wh{A_{H_2+H_3+H_4}^H}(z)
    =\wt{H_2+H_3+H_4}(\lambda_k(1-z))
    =\wt{\phase_2}(\lambda_k(1-z))\wt{\phase_{3}}(\lambda_k(1-z))\wt{\phase_{4}}(\lambda_k(1-z)).
\end{align*}

Plugging into $N_2^L$, we have
\begin{align*}
    \wh{N_2^L}(z)&=\wh{A_{H_{4,1}}^L}(z)=\wt{H_{4,1}}(\lambda_1(1-z))
    =\int_0^\infty \wt{H_{4,1}}(\lambda_1(1-z)\mid H_4=x)P\set{H_4=x} dx\\
    &=\int_0^\infty \wh{N_1^H}(\beta(z)\mid H_4=x)e^{-\lambda_1(1-z)x}P\set{H_4=x} dx\\
    &=\wt{\phase_2}(\lambda_k(1-\beta(z)))\wt{\phase_{3}}(\lambda_k(1-\beta(z)))\wt{\phase_{4}}(\lambda_k(1-\beta(z))+\lambda_1(1-z)).
    \qedhere
\end{align*}
\end{proof}


Compared to $\phase_1$ and $\phase_2$, the durations $\phase_3$ and $\phase_4$ are relatively straightforward to analyze.
Specifically, the length of phases 3 and 4 depends only on the definition of the number of servers, $k$, and MSFQ threshold, $\ell$.
These phases are independent of the lengths of other prior phases.
We compute $\wt{\phase_3}(s)$ and $\wt{\phase_4}(s)$ in Lemma~\ref{lem:h3} and Lemma~\ref{lem:h4}, deferring the proofs to \ref{app:h3-h4}:
\begin{restatable}{lemma}{tthreeperiod}
\label{lem:h3}
    The Laplace transform of the duration of phase 3 is given by:
    $
        \wt{\phase_3}(s)=\prod\nolimits_{j=\ell+1}^{k-1} \wt{\phase_{3, j}}(s),
    $
    where $H_{3, j}$ is the transit time from $j$ light jobs in the system to $j-1$ light jobs in the system, with transform:
    \begin{align}
        \wt{\phase_{3, j}}(s)&=\begin{cases}
            \frac{j\mu_1}{\lambda_1+j\mu_1+s-\lambda_1 \wt{\phase_{3, j+1}}(s)} & j < k\\
            \wt{B^L_{S_1}}(s) & j = k\,.
        \end{cases}\label{eq:h3comp}
    \end{align}
\end{restatable}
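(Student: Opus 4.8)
The plan is to reduce phase~3 to a first-passage computation for the small-job queue-length process $\set{n_1(t)}$, which during phase~3 evolves on its own as an M/M/$k$ queue: arrivals occur at rate $\lambda_1$, each of the at most $k$ jobs in service completes at rate $\mu_1$, so the downward rate in state $m$ is $\min(m,k)\mu_1$. During phase~3 we have $z=3$, $u_k=0$, and $u_1=\min(n_1,k)$, while $n_k$ just accumulates Poisson$(\lambda_k)$ arrivals independently of $n_1$; hence $\set{n_1(t)}$ is autonomously Markov during the phase, and I may apply the strong Markov property to it alone. By the definition of MSFQ, phase~3 begins with $n_1=k-1$ and ends the first instant $n_1=\ell$, and since $n_1(t)$ jumps by $\pm1$ it is skip-free downward, so it hits $k-2,k-3,\dots,\ell+1,\ell$ in that order. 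Letting $\tau_j$ be the first time during phase~3 with $n_1=j$, we get $\phase_3=\tau_\ell-\tau_{k-1}=\sum_{j=\ell+1}^{k-1}(\tau_{j-1}-\tau_j)$; by the strong Markov property at each $\tau_j$ together with time-homogeneity, the increments $\tau_{j-1}-\tau_j$ are mutually independent, each equal in distribution to the first-passage time $\phase_{3,j}$ from $j$ to $j-1$. Taking Laplace-Stieltjes transforms turns this sum into the product $\wt{\phase_3}(s)=\prod_{j=\ell+1}^{k-1}\wt{\phase_{3,j}}(s)$.

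Next I would obtain the recursion for $\wt{\phase_{3,j}}(s)$ by a one-step analysis of the birth-death chain. For $j<k$, the time spent in state $j$ before a jump is $\exp(\lambda_1+j\mu_1)$-distributed; with probability $j\mu_1/(\lambda_1+j\mu_1)$ the jump is to $j-1$, so $\phase_{3,j}$ ends, while with probability $\lambda_1/(\lambda_1+j\mu_1)$ the jump is to $j+1$, after which the chain must first return to $j$ (an independent copy of $\phase_{3,j+1}$) and then complete a fresh independent copy of $\phase_{3,j}$. Writing this distributional identity in terms of transforms and solving the resulting linear equation for $\wt{\phase_{3,j}}(s)$ gives exactly $\wt{\phase_{3,j}}(s)=j\mu_1/(\lambda_1+j\mu_1+s-\lambda_1\wt{\phase_{3,j+1}}(s))$. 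For the boundary term $j=k$: as long as $n_1\ge k$ all $k$ servers are busy, so $n_1(t)-(k-1)$ is a birth-death process with up-rate $\lambda_1$ and down-rate $k\mu_1$ at every positive value, started at $1$ and absorbed at $0$; this is exactly a busy period of the single-server queue serving small jobs, whence $\wt{\phase_{3,k}}(s)=\wt{B^S_{S_1}}(s)$, the same small-job busy-period transform that already appears in \cref{lemma:cycle-length}. Substituting this base case into the recursion finishes the proof.

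The step I expect to be the main obstacle is the boundary case $j=k$. Excursions of $n_1$ above $k-1$ are unbounded, so one must show that the first return from $k$ to $k-1$ is almost surely finite and identify its distribution with the small-job busy period (being careful that ``all $k$ servers busy'' corresponds to the effective single-server rate $k\mu_1$ used in the busy-period convention of \cref{sec:cycle-analysis}). Finiteness holds because stability of the system forces $\lambda_1<k\mu_1$ by \cref{thm:neg-stability}, which is precisely the condition making this busy period a proper random variable. Everything else is routine: the product decomposition is just the strong Markov property for a skip-free-downward chain, and the recursion is elementary first-step analysis with transforms, so I anticipate no difficulty there.
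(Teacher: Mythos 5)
Your proposal is correct and follows essentially the same route as the paper's proof: decompose $\phase_3$ into the independent downward first-passage times $\phase_{3,j}$ (the paper simply states the sum, you justify it via skip-free-downward structure and the strong Markov property), derive the recursion for $\wt{\phase_{3,j}}(s)$ by conditioning on the first event, and identify $\phase_{3,k}$ with the small-job busy period $B^S_{S_1}$. You supply more rigor than the paper (independence of increments, finiteness of the return from $k$ to $k-1$ under $\lambda_1 < k\mu_1$), but the decomposition, the one-step analysis, and the busy-period boundary case are the same.
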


\begin{restatable}{lemma}{tfourperiod}
\label{lem:h4}
The Laplace transform of the duration of phase 4 is given by:
$
    \wt{H_4}(s)=\prod\nolimits_{j=1}^\ell \frac{j\mu_1}{j\mu_1+s}.
$ 
\end{restatable}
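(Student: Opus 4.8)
The plan is to recognize phase 4 as a pure-death process and read off its duration directly, since by the definition of MSFQ no new class-1 job is admitted to service during phase 4. Phase 4 begins immediately after phase 3, at which instant $n_1=\ell$: phase 3 walks $n_1$ down one unit at a time starting from $k-1$, so the first time $n_1\le\ell$ occurs exactly at $n_1=\ell$ (and phase 3 is degenerate when $\ell=k-1$, consistent with the empty product in \cref{lem:h3}). Because $\ell\le k-1<k$, all $\ell$ of these jobs are in service at the start of phase 4, and phase 4 ends once all $\ell$ of them have completed.

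First I would argue that while there are $j$ class-1 jobs in service, the sojourn time at that level is $\exp(j\mu_1)$-distributed: by the memoryless property each of the $j$ jobs has fresh i.i.d.\ $\exp(\mu_1)$ remaining work, so the time to the next completion is the minimum of $j$ independent $\exp(\mu_1)$ variables, i.e.\ $\exp(j\mu_1)$. The memoryless property also ensures that, conditioned on reaching $j-1$ jobs in service, the remaining jobs again have fresh i.i.d.\ $\exp(\mu_1)$ work, so the successive sojourn times are mutually independent. Hence $\phase_4 = \sum_{j=1}^{\ell} E_j$ with $E_j\sim\exp(j\mu_1)$ independent, and the Laplace--Stieltjes transform factorizes as $\wt{\phase_4}(s)=\prod_{j=1}^{\ell}\wt{E_j}(s)=\prod_{j=1}^{\ell}\frac{j\mu_1}{j\mu_1+s}$, which is the claimed formula (with the empty product equal to $1$ when $\ell=0$, matching the instantaneous phase 4 in that case).

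There is essentially no obstacle here; the only point that warrants a sentence of care is that class-1 jobs \emph{do} keep arriving during phase 4, but by the definition of MSFQ they are queued rather than put into service, so they leave $u_1(t)$ unchanged and therefore do not affect $\phase_4$ — they only affect the value of $n_1$ carried into phase 1, which is handled separately in \cref{lem:ztrans}. In this sense phase 4 is precisely the arrival-free ``tail'' of the pure-death walk already used in \cref{lem:h3}, run from $\ell$ down to $0$, and the transform formula follows immediately.
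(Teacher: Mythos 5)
Your proposal is correct and takes essentially the same route as the paper: decompose $\phase_4$ into the successive downward-step sojourn times of a pure-death process started at $\ell$, observe each is $\exp(j\mu_1)$ and they are mutually independent by memorylessness, and multiply the transforms. (Your version is in fact slightly more careful than the paper's terse "i.i.d." phrasing, which should read "independent but not identically distributed," and you rightly note that phase-4 arrivals affect $n_1$ but not $u_1$ and hence not $\phase_4$.)
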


Given Lemmas \ref{lemma:cycle-length}-\ref{lem:h4}, we are now ready to prove the summary theorem, Theorem~\ref{thm:summary}.

\msfqsummary*

\begin{proof}[Proof of \Cref{thm:summary}]
    Theorem \ref{thm:summary} follows from Lemmas \ref{lemma:cycle-length}-\ref{lem:h4}.
    These lemmas provide recursively-defined transforms that can be differentiated to obtain $\E[T]$.
    For convenience, we provide a calculator that performs these computations and returns the desired approximation
    \footnote{The calculator program can be found at \href{https://github.com/jcpwfloi/msfq-calculator}{https://github.com/jcpwfloi/msfq-calculator}}.

\end{proof}

\section{Simulation Results}
\label{sec:sim}



In \Cref{sec:results}, we derived an approximation of the mean response time under an MSFQ policy in the one-or-all case.
This raises two important questions.
First, how does MSFQ compare to other non-preemptive scheduling policies in the one-or-all case?
And second, how do these results generalize to workloads with additional classes of jobs?
To address these questions, we now evaluate MSFQ and several policies from the literature, comparing the mean response time predicted by our theoretical results to simulations of various other policies.
Specifically, we compare MSF against the First-Fit BackFilling policy \cite{grosof-2023-serverfilling}, and against the nonpreemptive Markovian Service Rate policy (nMSR) \cite{chen2025improving}.
We will show that our approximations from \Cref{sec:results} are highly accurate, and that MSFQ significantly outperforms all competitor policies.

We begin by simulating policies in the one-or-all case in \Cref{sec:sim-one-or-all} to show that our response time analysis is accurate and that MSFQ is by far the best of the non-preemptive scheduling policies.
We then study two natural generalizations of MSFQ to workloads with more than two job classes.
We show that these generalizations, Adaptive Quickswap and Static Quickswap, perform well under more general workloads using both synthetic traces (\cref{sec:sim-general}) and traces from the Google Borg cluster scheduler (\cref{sec:sim-borg}).

For simulation results, we wrote a discrete event simulation framework specifically developed for MSJ systems. Our simulator implements a wide range of scheduling policies and can either generate synthetic workloads or use real-world traces.
This framework is available on GitHub\footnote{\href{https://github.com/NeDS-Lab/mjqm-simulator/}{https://github.com/NeDS-Lab/mjqm-simulator/}}.

\subsection{Simulation Metrics}
\label{sec:weighted-metric}
To evaluate our simulations, we will use a variety of response time metrics.
For each class of jobs, $j$, we define $\E[T^{(j)}]$ to be the mean response time of class-$j$ jobs.
This allows us to examine the mean response time of each class separately to see how a policy balances the response times between job classes.

Assuming a workload consisting of $m$ job classes, we can write the mean response time across all jobs as
$$\E[T] = \sum_{j=1}^m p_j \E[T^{(j)}].$$

We note, however, that as the number of job classes grows and the server needs and job sizes vary more between job classes, $\E[T]$ is not always the most meaningful metric.
Specifically, in more complex cases, a large fraction of the system load can be composed of a small fraction of the jobs in the system.
These jobs, which generally have large server needs and large mean job sizes, will be mostly ignored in the computation of $\E[T]$ because their $p_i$ terms are small.
For example, we find that in the Google Borg cluster scheduler, 85.8\% of the system load is contributed by just 0.34\% of the jobs in the workload.
This aligns with the findings of the original Google Borg papers \cite{borgClusterManager,tirmazi2020borg}.

To illustrate the problem with mean response time in this scenario, we examine the fairness properties of various scheduling policies in \DIFnomarkup{\ref{app:fairness}}.
We find that policies that appear to perform well with respect to mean response time actually allow the heavy jobs in the system to suffer disproportionately.
For example, under MSF, the mean response time of the heaviest jobs can be several orders of magnitude larger than the mean response time of the other job classes, even though the overall mean response time remains low.
Because the heavy jobs can comprise a significant fraction of the overall system load (and therefore a large portion of the revenue for a system operator), this degree of unfairness cannot be tolerated.

It is more realistic to \emph{balance} overall mean response time and fairness.
We therefore introduce \emph{weighted mean response time}, a metric that allows us to consider mean response time and fairness simultaneously.
We define weighted mean response time as
\begin{align*}
E[T^w] = \frac{\sum_{j=1}^k j/\mu_j\cdot p_j E[T^{(j)}]}{\sum_{i=1}^k i/\mu_i\cdot p_i}=\sum_{j=1}^k\frac{ \rho_j }{\rho}E[T^{(j)}],
\end{align*}
where $\rho_j=\frac{j\lambda_j}{\mu_j}$ is the system load contributed by class-$j$ jobs and $\rho=\sum_{j=1}^k \rho_j$ is the total system load.
Under this definition, each job class's weight corresponds to the fraction of load it contributes to the system.
Said another way, a class's weight is proportional to the server-hours used by the class (cost paid), preventing the scenario where a scheduling policy can ignore the infrequent but heavy jobs in the workload.



\subsection{Two job classes: one-or-all MSJ}
\label{sec:sim-one-or-all}


We first evaluate the performance of MSFQ with $\ell=k-1$ in the one-or-all setting analyzed in \Cref{sec:results}.
We compare MSFQ to MSF, as well as the First-Fit and nMSR policies examined in the prior work.
Our simulations consider a system with $k=32$ servers, where 90\% of job arrivals are light jobs and the mean job size is 1 for both heavy and light jobs.
That is, $p_1=0.9$, $p_k=0.1$, and $\mu_1=\mu_k=1$.
These parameters reflect the common setting where 10\% of the jobs (the heavy jobs) comprise about 80\% of the load on the system.
We set $\ell=k-1$ because all servers will be utilized when there are $k$ or more light jobs in the system.
As soon as there are fewer than $k$ light jobs in the system and some servers are idle, it makes sense to try to serve the heavy jobs in the system.
While the exact choice of $\ell$ does affect system performance, Figure \ref{fig:threshold} shows that mean response time is largely independent of $\ell$ as long as $\ell$ is not set very close to 0.
\begin{wrapfigure}{r}{0.48\textwidth}
 \vspace{-.2in}
    \includegraphics[width=.48\textwidth]{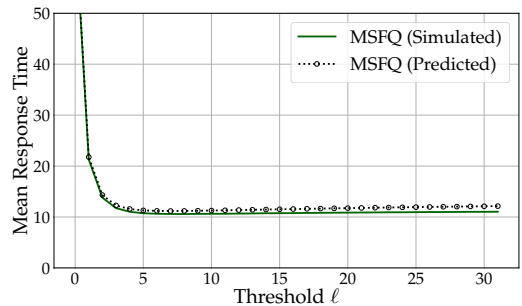}
    \caption{Impact of the threshold value, $\ell$, on mean response time of the MSFQ policy evaluated in \Cref{fig:rt_oneorN_32}.}\label{fig:threshold}
\end{wrapfigure}

\begin{figure}[t]
\centering
\begin{subfigure}[t]{0.48\textwidth}
    \centering
    \includegraphics[width=\textwidth]{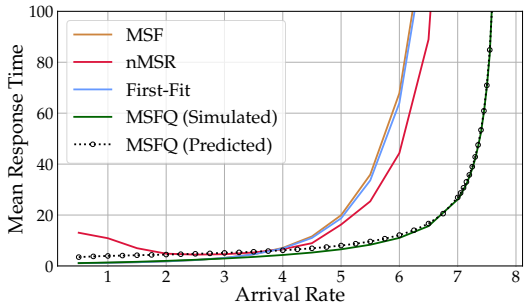}
    \caption{Overall unweighted mean response time}\label{fig:rt_oneorN}
\end{subfigure}\quad
\begin{subfigure}[t]{0.48\textwidth}
    \centering
    \includegraphics[width=\textwidth]{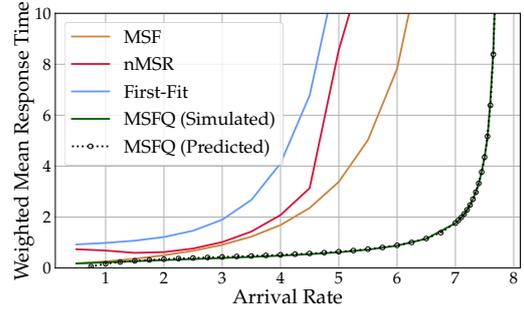}
    \caption{Overall weighted mean response time} \label{fig:rt_oneorN_weighted}
\end{subfigure}\quad
\begin{subfigure}[t]{0.48\textwidth}
    \centering
    \includegraphics[width=\textwidth]{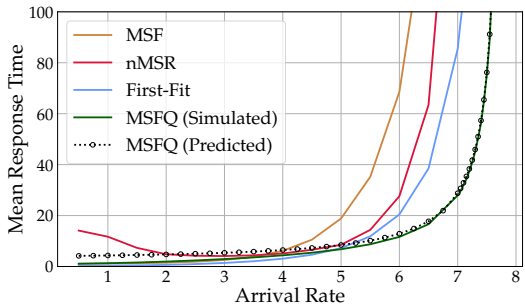}
    \caption{Mean response time of light jobs} \label{fig:rt_oneorN_small}
\end{subfigure}\quad
\begin{subfigure}[t]{0.48\textwidth}
    \centering
    \includegraphics[width=\textwidth]{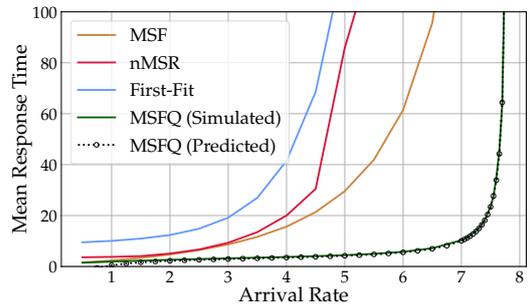}
    \caption{Mean response time of heavy jobs} \label{fig:rt_oneorN_heavy}
\end{subfigure}\quad
\caption{Mean response time as a function of job arrival rate in a one-or-all MSJ system with $k=32$, $p_1=0.9$, and $\mu_1 = \mu_k = 1$.  MSFQ beats all other non-preemptive policies in terms of both mean response time and weighted mean response time.  In particular, MSFQ can be two orders of magnitude better than MSF and nMSR with respect to both metrics.}
\label{fig:rt_oneorN_32}
\vspace{-0.2in}
\end{figure}

\begin{wrapfigure}{r}{0.48\textwidth}
\vspace{-.4in}
    \includegraphics[width=.48\textwidth]{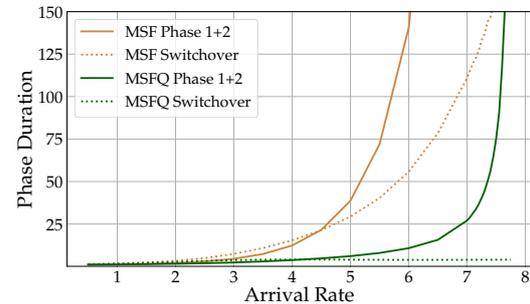}
    \caption{Service phase durations for the MSFQ policy evaluated in \Cref{fig:rt_oneorN_32}.}\label{fig:phases}
    \vspace{-.15in}
\end{wrapfigure}

\cref{fig:rt_oneorN_32} shows the effect of varying the arrival rate, $\lambda$, on weighted and unweighted mean response time. 
We see that our analysis of the mean response time under MSFQ is highly accurate at a wide range of arrival rates.
Furthermore, our MSFQ policy achieves the best weighted and unweighted mean response time in all cases, outperforming the competitor policies by two orders of magnitude when the arrival rate is high.
We also measure the mean response time for each job class in \cref{fig:rt_oneorN_small,fig:rt_oneorN_heavy}.
These results confirm that MSFQ improves both classes' mean response time individually to improve the overall mean response time.


Our response time analysis in \cref{sec:rt} showed that MSFQ improves mean response time by switching faster between service phases.
To illustrate this, we measure the phase durations of MSF and MSFQ during the above simulations in \cref{fig:phases}.
Recall that MSF is equivalent to an MSFQ policy with threshold $\ell=0$, and the MSFQ policy in this case uses $\ell=k-1$.
Hence, both policies have full resource utilization in phases 1 and 2, and use the remaining phases to switch the class of job in service.
\cref{fig:phases} shows that MSFQ has shorter switching phases, leading to much shorter durations of phases 1 and 2.

We further illustrate the impact of having shorter phase durations in \cref{fig:threshold}, which shows the effect of the threshold value, $\ell$, on the mean response time of MSFQ.
Using any threshold value larger than 0 has a dramatic benefit on mean response time by allowing faster switchover times and shorter phase durations.
We also note that, while setting an arbitrarily large threshold could waste capacity by causing the system to switch too frequently, this effect is limited in practice. 
Hence, while our theoretical results can be used to select the optimal value of $\ell$, a good heuristic appears to be to choose $\ell = k - 1$.

\subsection{Generalizing to Additional Job Classes}
\label{sec:sim-general}

While MSFQ has good performance in the one-or-all case, it is not immediately clear how these results generalize to cases with additional job classes.
Hence, we now simulate the Static Quickswap and Adaptive Quickswap policies defined in \Cref{sec:policies}.
These policies generalize MSFQ to cases with many job classes, using the Quickswap mechanism to try and maintain the short phase durations of MSFQ.
Given the added variability in server needs, we will focus on weighted mean response time in this multiclass case.
We consider a system with $k=15$ servers and 4 classes: class-1, class-3, class-5, and class-15.
Each class has a mean job size of 1, and we set $p_1=0.5$, $p_3=0.25$, $p_5=0.2$, and $p_{15}=0.05$. 
Note that we have chosen the server needs to divide $k$ so that any one class of jobs can utilize all $k$ servers.
By \Cref{rem:general-stability}, it is therefore possible to stabilize the system when $\lambda < 5$.


\Cref{fig:pof2} shows that Static and Adaptive Quickswap both provide an advantage over the competitor policies. Adaptive Quickswap performs the best in practice because it uses more complex switching logic to avoid having unused servers. Static Quickswap performs slightly worse than Adaptive Quickswap in all cases. However, Static Quickswap is guaranteed to be throughput-optimal by \Cref{rem:general-stability}, while Adaptive Quickswap has no such guarantee. Both policies outperform MSF and First-Fit in all cases.


\subsection{Workloads Derived from Google Borg Traces}
\label{sec:sim-borg}

To further evaluate the Adaptive and Static Quickswap policies, we simulate these policies using workloads derived from the Google Borg cluster scheduler traces~\cite{tirmazi2020borg}.
Specifically, we use the methodology of \cite{baiocchi} to extract a workload with the same arrival rates, mean job sizes, and server needs as the Google Borg traces.
\begin{wrapfigure}{r}{0.4\textwidth}
    \vspace{-.1in}
    \includegraphics[width=.48\textwidth]{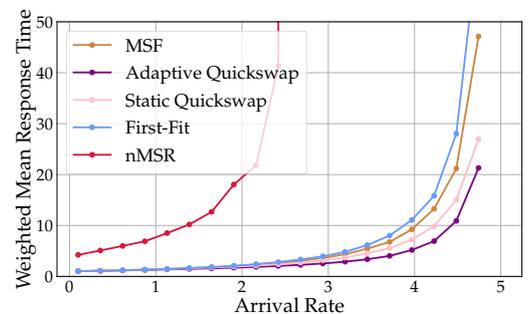}
  \vspace{-.2in}
    \caption{Weighted mean response time as a function of job arrival rate in a 4-class MSJ system with $k=15$, $p_1=0.5$, $p_3=0.25$, $p_5=0.2$, and $p_{15}=0.05$.}\label{fig:pof2}
\end{wrapfigure}
\begin{wrapfigure}{r}{0.4\textwidth}
  \vspace{-.2in}
    \includegraphics[width=.48\textwidth]{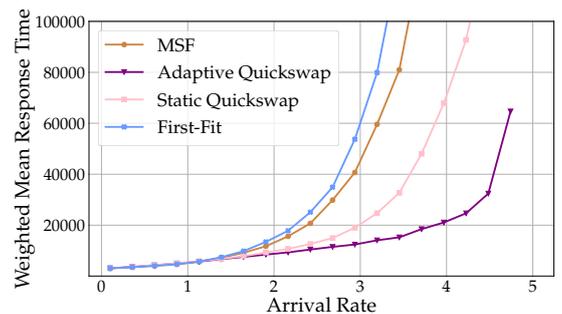}
  \vspace{-.2in}
    \caption{Weighted mean response time as a function of job arrival rate for MSJ systems serving a Google Borg workload.  Here, $k=2048$ and the workload is composed of 26 classes based on real-world trace data.}\label{fig:borg}
    \vspace{-.3in}
\end{wrapfigure}
Our workload consists of 26 job classes from Cell B of the 2019 Borg traces \cite{tirmazi2020borg}.
We set $k$ based on the server need of the heaviest class, so $k=2048$ in our experiments.
The resulting stability region is defined by $\lambda < 4.94$.


\Cref{fig:borg} shows the benefit of using Static and Adaptive Quickswap instead of a competitor policy.
While all policies remain stable, Adaptive and Static Quickswap are once again dominant, improving weighted mean response time by two orders of magnitude when the arrival rate is high.
Note that, due to its poor performance in prior experiments, nMSR is omitted from \Cref{fig:borg}.
These results generally match the trends observed with the synthetic workloads of \Cref{sec:sim-general}, showing a significant benefit obtained with Adaptive Quickswap.
However, even using Static Quickswap provides a 5x reduction in weighted mean response time at high load, compared to the next closest competitor, MSF.


We also compare the unweighted mean response time of MSF, Static and Adaptive Quickswap, and First-Fit in \DIFnomarkup{\ref{app:fairness}}. While MSF achieves good performance at light to medium loads, we show in \DIFnomarkup{\ref{app:fairness}} that this is at the expense of sacrificing certain classes of other jobs by computing the fairness index.
In addition, we show that better fairness metrics and response time performance are possible in \DIFnomarkup{\ref{app:preemption}} by enabling preemption if there is no preemption overhead.
However, preemptions without overheads are unrealistic, therefore we are only focusing on non-preemptive policies.

\section{Conclusion}

This paper describes new, non-preemptive scheduling policies for multiserver jobs.
While the non-preemptive MSF policy has been observed to remain stable at high loads, and while we prove it is throughput-optimal in the one-or-all case, it suffers from high mean response time because it switches service phases too slowly.
We introduce the class of MSFQ policies, and its generalizations, Static Quickswap and Adaptive Quickswap.
These Quickswap policies use a queue length threshold to decide when to switch phases, allowing for the design of policies that switch phases much faster than MSF.
We prove that MSFQ is throughput-optimal in the one-or-all case, analyze the mean response time of MSFQ in the one-or-all case, and demonstrate the benefits of Quickswap policies in simulations based on traces from the Google Borg cluster scheduler.

Before this paper, there were two state-of-the-art choices for non-preemptive multiserver scheduling.
There was MSF, which suffers from slow phase changes, and the class of MSR policies, which use a Markov chain to select schedules and allow phase changes at a faster rate.
The drawback to MSR policies is that their scheduling decisions do not consider queue length information.
As a result, an MSR policy may waste system capacity by reserving servers for jobs that are not in the system.
Hence, before MSFQ, one had to choose between either high resource utilization and slow phase switching, or low resource utilization and fast phase switching.
This paper shows that MSFQ policies get the best of both worlds, using queue length information to switch phases faster without wasting servers.
Although MSFQ is more complex than either of the prior policies, we still provide an accurate analysis of its mean response time in the one-or-all case.

\section*{Acknowledgements}
We thank our shepherd, Dr. Rhonda Righter, and the anonymous reviewers for their insightful feedback
that helped improve our work. This work is supported by a Northwestern IEMS Startup Grant, a UNC Chapel Hill Startup Grant, and National Science Foundation
grants NSF-CCF-2403195 and NSF-IIS-2322974. This work is also supported by TUCAN6-CM (TEC-2024/COM-460), funded by CM, the Region of Madrid, Spain (ORDEN 5696/2024).



\clearpage
\bibliographystyle{acm}

\bibliography{biblio}
\appendix

\section{Proof of \cref{thm:pos-stability}}
\label{app:stability}
\stability*
\begin{proof}
    Recall from \cref{sec:MSJ} that we represent the system state as the five-tuple $(n_1, n_k, u_1, u_k, z)$,
        where $z$ is the phase, as defined in \cref{sec:MSFQ}.
        Note that $u_1 \le k$, $u_k \le 1$, and at least one of $u_1$ and $u_k$ must be 0.

        Now, we can define our Lyapunov function $V(\cdot)$:
        \begin{align}
            \label{eq:lyapunov-stability}
            V(n_1, n_k, u_1, u_k, z) :=
              \frac{n_1}{k \mu_1}
            + \frac{n_k}{\mu_k}
            + \mathbbm{1}\{z = 2\} \frac{\epsilon n_k}{2 \lambda_k}
            + \mathbbm{1}\{z \not\in \{1, 2\}\} (c^k - c^{k - u_1}),
        \end{align}
        where $\epsilon = 1 - \frac{\lambda_1}{k \mu_1} -\frac{\lambda_k}{\mu_k}$,
        and where $c = \max\big(2, \frac{\lambda_1}{(\ell - 1) \mu_1 + 1}, \frac{1}{\mu_1}\big)$.

        Intuitively, the two non-indicator terms ensure negative drift in phases 1 and 2,
        where all servers are busy.
        However, they do not suffice for the other phases, when some servers are idle.
        The $\mathbbm{1}\{z = 2\}$ term reduces the negative drift in phase 2 slightly
        to build up a potential,
        which is used in the other phases by the $\mathbbm{1}\{z \not\in \{1, 2\}\}$ term
        to maintain negative drift in that phase.

        We specifically use the continuous-time Foster-Lyapunov theorem~\cite{tweedie_sufficient_1975}.
        We must demonstrate that this Lyapunov function has three properties,
        two of which are defined with reference to the drift $\E[G \circ V(\cdot)]$,
        where $G$ is the instantaneous generator operator of the system.
        \begin{enumerate}
            \item There exists a finite set of states $B$
            and a positive constant $\delta > 0$ such that for all states outside of $B$,
            $\E[G \circ V(\cdot)] \le -\delta$.
            \label{it:stability-1}
            \item There exists a constant $C$ such that
            for all states,
            $\E[G \circ V(\cdot)] \le C$.
            \label{it:stability-2}
            \item There exists a lower bound $D$ such that for all states, $V(\cdot) \ge D$.
            \label{it:stability-3}
        \end{enumerate}
        Demonstrating Property \ref{it:stability-1} is the primary challenge.
        Property \ref{it:stability-3} holds with $D = 0$.
        Property \ref{it:stability-2} follows from the fact that
        $V$ is linear in the two unbounded inputs $n_1, n_k$, which change by at most 1 upon any transition. 
        Furthermore, from any state, the total transition rate is at most $\lambda + \max\{k \mu_1, \mu_k\}$.
        Hence, there exists an upper bound on drift from any system state.

        It thus remains to prove Property \ref{it:stability-1},
        which we show holds with $\delta = \epsilon/2$
        and the following finite exception set $B$,
        consisting of all states in phase 2 in which $n_1 = k$, so phase 2 has the potential to end, and where $n_k$ is below a threshold, as well as the empty state:
        \begin{align*}
            B = \big\{(n_1, n_k, u_1, u_k, z) \mid \big( n_1 = k \,\&\, n_k \le \frac{2 \lambda_k (c^k - 1)}{\epsilon} \,\&\, z = 2 \big) \text{ or } \left(n_1 = 0 \,\&\, n_k = 0\right) \big\}.
        \end{align*}

        We specialize our argument based on the current phase of the system $(z = 1, 2, 3, 4)$,
        and whether the system is in a state on the border of switching phases.

        We start with non-border states in phase 1.
        In this case, $u_k = 1, z=1$. As a result, the drift of $V(\cdot)$ is:
        \begin{align*}
            \E[G \circ V(\cdot)] =
            \frac{\lambda_1}{k \mu_1} + \frac{\lambda_k}{\mu_k} - u_k =
            \frac{\lambda_1}{k \mu_1} + \frac{\lambda_k}{\mu_k} - 1 =
            -\epsilon \le -\delta,
        \end{align*}
        where we use the fact that the drift of $n_1$ is $\lambda_1$ and of $n_k$ is $\lambda_k$.

        Next, consider states in phase 1 that are on the border of switching to phase 2.
        In the subsequent phase 2 state, $n_k = 0$, because the MSFQ policy only switches from serving heavy jobs to
        light jobs when there are no heavy jobs remaining.
        As a result, the $\mathbbm{1}\{z = 2\}$ term in
        \eqref{eq:lyapunov-stability} is 0 when beginning phase 2,
        so \cref{it:stability-1} holds throughout phase 1.

        Next, in phase 2, in states which are not on the border, we have $u_1 = k, z= 2$, by the definition of phase 2 in \cref{sec:MSFQ}. As a result, the drift of $V(\cdot)$ is
        \begin{align*}
            \E[G \circ V(\cdot)] =
            \frac{\lambda_1}{k \mu_1} - \frac{u_1}{k} 
            + \frac{\lambda_k}{\mu_k}
            + \frac{\epsilon \lambda_k}{2 \lambda_k} =
            -\epsilon + \frac{\epsilon}{2}
            = -\epsilon/2 = -\delta.
        \end{align*}

        When the system is in phase 2 and is on the border of switching to phase 3, note that when a phase change occurs, the $\mathbbm{1}\{z = 2\}$ term in \eqref{eq:lyapunov-stability} will change from
        $\frac{\epsilon n_k}{2 \lambda_k}$ to 0,
        and the $\mathbbm{1}\{z = 3\}$ term will change from
        $0$ to $c^k - 1$.
        Recall that the exception set $B$ includes all states in phase 2 on the border where $n_k$ is below the threshold: $n_k \le (c^k - 1) 2 \lambda_k/\epsilon.$
        For all phase 2 border states outside this set, the change in indicators from phase 2 to phase 3
        results in a negative change in $V(\cdot)$, as desired.
        Thus, \cref{it:stability-1} holds throughout phase 2.

        Finally, in the remaining phases 3 and 4, we use different arguments depending on the value of $u_1$.
        We start with states that do not transition directly to phase 1.

        First, in the case where $u_1 = k$, the servers are fully occupied by light jobs.
        In this case, if $n_1 > k$,
        the $\mathbbm{1}\{z = 2\}$ term in \eqref{eq:lyapunov-stability}
        does not change on the next arrival or completion,
        because $u_1 = k$ will remain true after the next event. Thus,
        \begin{align*}
            \E[G \circ V(\cdot)] =
            \frac{\lambda_1}{k \mu_1} - \frac{u_1}{k} 
            + \frac{\lambda_k}{\mu_k} = - \epsilon \le -\delta.
        \end{align*}
        If $u_1 = k$ and $n_1 = k$,
        the indicator term has a negative instantaneous drift:
        If a job completes, $n_1 = k-1$, so the indicator term becomes $c^k - 1$,
        while under any other event, the indicator term remains $c^k$.
        Thus,
        \begin{align*}
            \E[G \circ V(\cdot)] =
            \frac{\lambda_1}{k \mu_1} - \frac{u_1}{k} 
            + \frac{\lambda_k}{\mu_k}
            - k \mu_1 = - \epsilon - k \mu_1 \le -\delta.
        \end{align*}

        In the case where $\ell < u_1 < k$,
        some servers are idle, and the system is in phase 3,
        so light jobs continue to enter service.
        In this case, we upper bound the drift of $V$ as follows:
        \begin{align}
            \nonumber
            \E[G \circ V(\cdot)] &=
            \frac{\lambda_1}{k \mu_1}
            - \frac{u_1}{k}
            + \frac{\lambda_k}{\mu_k}
            + \lambda_1 (c^{k - u_1} - c^{k - (u_1 + 1)})
            + u_1 \mu_1 (c^{k - u_1} - c^{k - (u_1 - 1)}) \\
            &=
            \frac{\lambda_1}{k \mu_1}
            - \frac{u_1}{k}
            + \frac{\lambda_k}{\mu_k}
            + c^{k - u_1}(1-c)(\lambda_1 c^{-1}
            - u_1 \mu_1)
            \label{eq:phase-3-nearly}
            \le
            1
            + c^{k - u_1}(1-c)(\lambda_1 c^{-1}
            - (\ell - 1) \mu_1)\,.
        \end{align}

        Recalling that $c = \max\big(2, \frac{\lambda_1}{(\ell - 1) \mu_1 + 1}\big)$, we substitute into \eqref{eq:phase-3-nearly}.
        As a result,
        \begin{align*}
            \E[G \circ V(\cdot)] \le
            1
            + c^{k - u_1}(1-c)(\lambda_1 c^{-1}
            - (\ell - 1) \mu_1) \le 1 + 2^1 (-1)(1) = -1 \le -\epsilon \le -\delta\,.
        \end{align*}

        In the case where $0 < u_1 \le \ell$: the system is in
        phase 4, and light jobs are blocked from entering service.
        In this case, arriving jobs do not increase $u_1$,
        so the drift is much simpler:
        \begin{align*}
            \E[G \circ V(\cdot)] &=
            \frac{\lambda_1}{k \mu_1}
            - \frac{u_1}{k}
            + \frac{\lambda_k}{\mu_k}
            + u_1 \mu_1 (c^{k - u_1} - c^{k - (u_1 - 1)}) \\
            &\le 1 + u_1 \mu_1 (c^{k - u_1} - c^{k - (u_1 - 1)}) 
            \le 1 + \mu_1 (c - c^2)
            = 1 + \mu_1 c (1-c)\,.
        \end{align*}
        Recalling that $c \ge 2$ and that $c \ge \frac{1}{\mu}$,
        we find that $\E[G \circ V] \le -1$, which completes this step.

        The remaining case within phases 3 and 4 is the case where $u_1 = 0$.
        In this case, the system must be empty,
        because we would otherwise switch to phase 1.
        Recall that this case is in the finite set $B$ of exceptional high-drift states,
        so it does not affect Property \ref{it:stability-1}.

        Having handled states in phases 3 and 4 that do not transition directly to phase 1, we now verify the drift condition for states where the system may transition from phase 3 or 4 to phase 1.
        When entering phase 1, $u_k = 1$, so $u_1 = 0$:
        All light jobs in service have been completed.
        As a result, the $\mathbbm{1}\{z \not\in \{1, 2\}\}$ term in \eqref{eq:lyapunov-stability}
        is 0, as desired.

        With all states verified, Property \ref{it:stability-1} holds,
        so the Foster-Lyapunov theorem demonstrates stability.
\end{proof}

\section{Proofs of \cref{lem:h3} and \cref{lem:h4}}
\label{app:h3-h4}
\tthreeperiod*

\begin{proof}
To characterize $\phase_{3, j}$, we condition on the first event that happens during the $\phase_{3, j}$ period.
When the first event is an arrival, the remainder of $\phase_{3, j}$ consists of a $\phase_{3, j+1}$ period, followed by another independent $\phase_{3, j}$ period.
When the first event is a completion, the period ends.
In particular, we have 
\begin{align*}
    \phase_{3, j}&=\begin{cases}
        Exp(j\mu_1+\lambda_1)+\phase_{3, j+1}+\phase_{3, j} & \textrm{next event is an arrival}\\
        Exp(j\mu_1+\lambda_1) & \textrm{next event is a departure}
    \end{cases}.
\end{align*}
Note that $\phase_{3, k}\sim B^L_{S_1}$ as the time going from having $k$ to $k-1$ light jobs is the busy period because the completion rate of light jobs here is $k\mu_1$.
Then, the phase duration of phase 3 is the sum of these periods. Formally,
$\phase_{3}=\sum\nolimits_{j=\ell+1}^{k-1}\phase_{3, j}.$
Then, by standard transform techniques, \cref{lem:h3} holds.
\end{proof}

\tfourperiod*

\begin{proof}
    In phase 4, no further light job arrivals are allowed into service, and there are $\ell$ light jobs at the beginning of phase 4.
    Therefore, we can write $\phase_4$ as a sum of i.i.d. exponential distributions: $\phase_4=\sum_{j=1}^\ell Exp(j\mu_1).$
    Therefore, by standard transform techniques,
    $
        \wt{H_4}(s)
        =\prod_{j=1}^\ell \wt{Exp(j\mu_1)}(s)=\prod_{j=1}^\ell \frac{j\mu_1}{j\mu_1+s}.
    $
\end{proof}

\section{Fairness Evaluation}
\label{app:fairness}

To compare the fairness of scheduling policies, we compute Jain's Fairness index \cite{jain1998quantitativemeasurefairnessdiscrimination} as
\begin{equation}
J( \E[T^{(1)}] , \E[T^{(2)}], \cdots , \E[T^{(k)}]) 
=
\frac{\left( \sum_{j=1}^k \E[T^{(j)}] \right)^2}{k \sum_{j=1}^k (\E[T^{(j)}])^2}.
\end{equation}
The value of the fairness index is between $\frac{1}{k}$ and 1.
A higher value of the fairness index indicates that the scheduling policy is fairer.

We examine the fairness of various scheduling policies in \Cref{fig:fairness}.
Although MSF and First-Fit achieve low \emph{unweighted} mean response time (\Cref{fig:cellB_unweighted}), the heavy jobs experience orders of magnitude larger waiting times than the light jobs (\Cref{fig:cellB_sandl}) under these policies.
Under Adaptive and Static Quickswap, on the other hand, the mean response times of light and heavy jobs are comparable.
As a result, Adaptive and Static Quickswap achieve higher fairness indices compared to MSF and First-Fit (\Cref{fig:cellB_fairness}).
Because unweighted mean response time can hide these important imbalances between the job classes, our evaluation uses \emph{weighted mean response time} --- a metric that balances overall mean response time and fairness.



\begin{figure}[H]
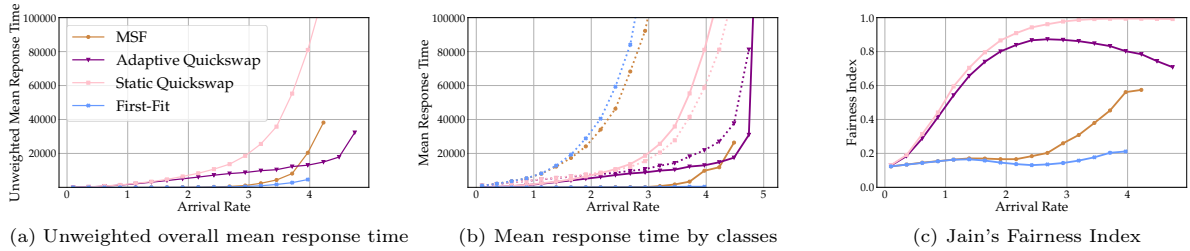

\centering
\begin{subfigure}[t]{0.32\textwidth}
   \centering
   \includegraphics[width=\textwidth]{newplots/cellB_unweighted.pdf}
   \caption{Unweighted overall mean response time}\label{fig:cellB_unweighted}
\end{subfigure}
\begin{subfigure}[t]{0.32\textwidth}
   \centering
   \includegraphics[width=\textwidth]{newplots/cellB_separate.pdf}
   \caption{Mean response time by classes}\label{fig:cellB_sandl}
\end{subfigure}
\begin{subfigure}[t]{0.32\textwidth}
   \centering
   \includegraphics[width=\textwidth]{newplots/cellB_fairness.pdf}
   \caption{Jain's Fairness Index}\label{fig:cellB_fairness}
\end{subfigure}
\caption{The response time performance and fairness index as a function of the overall arrival rate for MSJ systems serving a Google Borg workload. The middle plot, \Cref{fig:cellB_sandl}, shows the mean response time by class, where the dotted lines denote the mean response time of the heaviest jobs and the solid lines denote the mean response time of the lightest jobs. The other plots, left and right, show combined mean response time and fairness metrics across classes in a single, solid line.}
\label{fig:fairness}
\end{figure}

\section{Comparison with Preemptive Policies}
\label{app:preemption}
Although preemption is either infeasible or carries significant overhead for many datacenter workloads, we compare Adaptive and Static Quickswap to a preemptive policy for the sake of completeness.
Here, we assume that the preemptive policy, ServerFilling, can preempt jobs with no overhead or setup cost.
We see that ServerFilling can use preemption to greatly outperform any non-preemptive policy.

\begin{figure}[H]
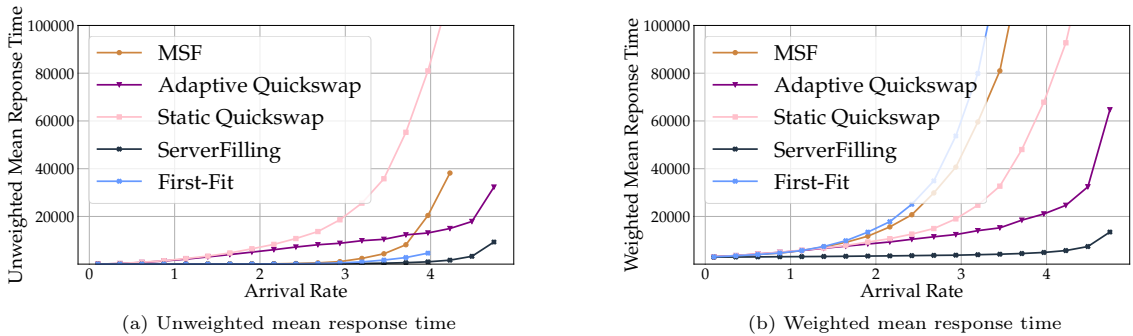

\centering
\begin{subfigure}[t]{0.45\textwidth}
   \centering
   \includegraphics[width=\textwidth]{newplots/sf_cellB_unweighted.pdf}
   \caption{Unweighted mean response time}\label{fig:preemptive-non-weighted}
\end{subfigure}\quad
\quad
\begin{subfigure}[t]{0.45\textwidth}
   \centering
   \includegraphics[width=\textwidth]{newplots/sf_cellB.pdf}
   \caption{Weighted mean response time}\label{fig:preemptive-weighted}
\end{subfigure}\quad
\caption{The overall mean response time of the system as a function of the overall arrival rate for MSJ systems serving a Google Borg workload. Points of higher arrival rates for the First-Fit policy are hidden because the experiments did not converge. }
\label{fig:Borg_cells_2048_rawr}
\end{figure}

Figure \ref{fig:Borg_cells_2048_rawr} shows that the preemptive ServerFilling policy \cite{wcfs} greatly outperforms all non-preemptive policies with respect to both unweighted and weighted mean response time. ServerFilling uses preemptions to guarantee full resource utilization whenever there are more than $k$  jobs in the system. The non-preemptive scheduling policies, on the other hand, may waste significant service capacity even when there are many jobs in the queue.
This waste leads to higher mean response times for all non-preemptive policies, including our Quickswap policies.

\end{document}